\def\ps@headings{%
\def\@oddhead{\mbox{}\scriptsize\rightmark \hfil \thepage}%
\def\@evenhead{\scriptsize\thepage \hfil \leftmark\mbox{}}%
\def\@oddfoot{}%
\def\@evenfoot{}}
\newcommand{\denselist}{\itemsep 0pt\parsep=1pt\partopsep 0pt}
\newcommand{\bitem}{\begin{itemize}\denselist}
\newcommand{\eitem}{\end{itemize}}
\newcommand{\benum}{\begin{enumerate}\denselist}
\newcommand{\eenum}{\end{enumerate}}
\long\def\pa#1{} 
\begin{document}
\title{\huge Space Filling Curves for 3D Sensor Networks \\with Complex Topology}

\author{
	\authorblockN{
		Mayank Goswami\authorrefmark{2} \hspace*{1.cm} Siming Li\authorrefmark{1} \hspace*{1.cm} Junwei Zhang\authorrefmark{1}\\  Emil Saucan\authorrefmark{3} \hspace*{1.cm} Xianfeng Gu\authorrefmark{1}  \hspace*{1.cm} Jie Gao\authorrefmark{1}\\ }
	\vspace*{.3cm}
	\authorblockA{
		\small
		\authorrefmark{1}
		Department of Computer Science, Stony Brook University. \{simingli, gu, jgao\}@cs.stonybrook.edu
		\\
		\authorrefmark{2}
		Max-Planck Institute for Informatics, Germany.
		gmayank@mpi-inf.mpg.de
		\\
		\authorrefmark{2}
		Technion, Israel Institute of Technology; Max Planck Institute for Mathematics, Leipzig, Germany. semil@ee.technion.ac.il
	}
}

\maketitle

\begin{abstract}

Several aspects of managing a sensor network (e.g., motion planning for data mules, serial data fusion and inference) benefit once the network is linearized to a path. 
The linearization is often achieved by constructing a space filling curve inside the domain. 
But existing methods cannot handle networks distributed on surfaces of complex topology.

This paper presents a novel method for generating space filling curves to $3$D sensor networks that are distributed densely on some two-dimensional geometric surface. Our algorithm is completely distributed and constructs a path which gets uniformly, progressively denser as the path becomes longer. We analyze the algorithm mathematically and prove that the curve we obtain is dense. Our method is based on Hodge Decomposition theorem and uses holomorphic differentials on Riemann surfaces. The underlying high genus surface is conformally mapped to a union of flat tori and then a proportionally-dense space filling curve on this union is constructed. The pullback of this curve to the original network gives us the desired curve.
We show via simulations that our method handles complicated topologies and performs much better than the alternative methods on both density and coverage rate. 
\end{abstract}

\section{Introduction}

In this paper we consider sensors deployed in $3$D space such that the sensors are located densely on some underlying $2$-dimensional geometric surface of possibly complex topology. This assumption models many practical scenarios in sensor deployment --- sensors are often attached to the surfaces of terrains, exterior/interior of buildings~\cite{Klann09tactical},  or other architectural structures, for easy installation and energy supplies, etc. In some other cases, the applications require sensors to be installed to monitor complex 3D structures, such as underground tunnels~\cite{Li09underground,Wu10holistic} or pipes~\cite{sokullu11data}. Therefore the sensors are located sparsely in 3D space but densely on a 2-dimensional surface (the ``boundary'' of some $3$D objects) of possibly complex topology. 

We are interested in innovative ways of managing such sensor networks, in the regime of using mobile entities to aid such management. Such mobile agents are often termed `data mules', since one of the major applications is to use a mobile node to collect data from static sensors~\cite{Jea:2005:MCM,somasundara06controllably,Vasilescu:2005:DCS,mascarenas08demonstration,anastasi08data}. Static data sinks suffer from the well known problem of `energy hole', as sensors near the sink are used more often and may run out of battery sooner than others. Mobile data sinks could get around this problem. Other applications of data mules include battery recharge, event response, as well as generic network health monitoring and repairment. Data mules may piggyback on entities with natural mobility such as human, animals or vehicles, but can also be dedicated mobile nodes with either active (i.e., controlled motion) or passive motion (e.g., carried away by wind or water current). The two-layer architecture with static sensors ensuring high spatial coverage and resolution and with mobile nodes allowing flexibity and control thus created a lot of interest in recent years. See the survey paper for a summary~\cite{Ekici06mobility}. 

When the mobile nodes have active motion and are dedicated for improving the effiency of the sensor networks, a central problem is the planning of its motion~\cite{ma07sencar,Kansal:2004:IFI,somasundara06controllably,xing07rendezvous,somansundara04mobile,zhao03message}. There are basically two major design principles. The first one is mainly for handling dynamic, spontaneous events and the data mules often behave in an on-demand manner. Metrics to optimize often include the response delay and total distance travelled. When the main objective is to minimize response delay, the proposed solutions in the literature are often variations of vehicular routing problem~\cite{arun07mobile}, which are known to be NP-hard~\cite{VRP-book}. When the main objective is to minimize the total distance travelled, the problem boils down to various variants of traveling salesman problem~\cite{Applegate:2007:TSP:1374811}, again, an NP-hard one~\cite{Garey1990computers}. When multiple data mules are in place, the coordination of them becomes much harder. Multi-TSP problem is NP-hard and does not have good approximations. 

The second design principle is for a sensor network with dense sensor distribution and the target is to plan data mule along a path that uniformly traverses the entire field. This represents a periodic solution by which all sensors have fair chance of being served by the data mule. It is better for scenarios when sensors have the same data generation rate, or when the sensor network requires a patrolling team to continuously monitor its general functioning and health. It can also be used for linear, logical operations in sensor networks such as data fusion~\cite{Patil04serialdata} or sequential inference. In contrast to the data mule planning problem as listed above, this problem can be abstracted as `linearization' of the sensor field which can substantially benefit from the underlying geometric embedding. 

The representative work in this direction is by Ban~\etal~\cite{ban13topology}, in which they generalize the idea of a space filling curve, often defined for a square, for a general 2D domain with holes.  A space filling curve is a single curve that recursively `fills up' the square, when the number of iterations goes to infinity~\cite{sagan94space}. For a sensor network with fixed density, a space filling curve nicely tours around all sensors with total travel length comparable to the traveling salesman solution. When the sensor network has holes, however, the space filling curve is broken and loses its nice properties. For a domain with a single hole, Ban~\etal~\cite{ban13topology} proposed to map it to a torus such that a space filling curve can be easily found --- by essentially following a line bouncing back and forth between the inner and outer boundaries. When there are more than one holes, all but one holes are mapped to `slits', and the path bounces on these slits too. It is rigorously proved that this curve is dense, i.e., any point of the domain will be covered by the path of sufficiently long; and the curve has bounded density, i.e., it does not path through any point too many times. 

This space filling curve idea for data mule planning has a number of advantages. First the density or coverage of the curve is uniformly, steadily increasing with the length of the path. This is one major difference from the traditional space filling curves in squares, which visits all points in one quadrant before moving to the next. The curve defined above takes a coarse sample of the sensor field and increases density of coverage progressively as the mule travels more. It is easy to tailor the trip for a given travel budget. The second advantage is that it allows multiple data mules to work with each other easily. All of them can follow their respective space filling curves, starting from different positions. These space filling curves in theory do not overlap each other, and in practice naturally complement each other. 

However, one major limitation of the mechanism in Ban~\etal~\cite{ban13topology} is its applicability to 2D domains with holes only. Terrains with holes can be handled via an additional mapping to 2D but general surfaces with high genus (multiple handles) cannot be handled. For underground deployment of sensors for monitoring tunnels, handles appear very often and we need a different scheme for generating the space filling curves. 

\smallskip\noindent\textbf{Our contribution.} The main result in this paper is a new linearization scheme for general sensor networks on 2D surfaces. We generate space filling curves with the same nice properties as those in~\cite{ban13topology}. In particular, the curves have 1) dense, progressive coverage -- that as the curve gets longer, the distance from any point to the curve decreases quickly; 2) uniform density -- a point is not visited more than a constant number of times. 

Moving from a 2D domain to a general 2D surface in 3D really makes the problem harder. New ideas are needed to make it work. The construction presented in this paper is new for mathematicians too. It has been known that dense curve exists but nobody knows how to compute one (even in the centralized way). Therefore the contribution is not only on the algorithm and application aspect, but also in the theoretical perspective. The algorithms presented here can be easily made to work in a distributed setting for a sensor network. We remark that the problem will be much easier if one drops the progressive density requirement. For that one can cut the 2D surface into small patches each mapped to a 2D domain such that previous methods can be applied.  

\section{Related Work}

In this section we quickly survey other ideas that generate a path to visit all sensor nodes. 

\noindent\textbf{Space filling curves.} 
Space filling curves have been used for linear/serial fusion~\cite{Patil04serialdata} in sensor networks when the sensors are deployed uniformly in a square. There has been a heuristic algorithm that generalizes a Hilbert curve for an ellipse~\cite{kamat07hilbert}. Technically the curves generated by Ban~\etal~\cite{ban13topology} and the one in this paper are not going to completely fill up the surface (since topologically a curve is different from a surface) -- but both the curves get infinitely denser as their lengths go to infinity. Thus the curve gets infinitesimally closer to every point.

\smallskip\noindent\textbf{Finding a tour.} On a sensor network deployed in space, generating a tour of the graph, depending on the requirement, maps to either the Hamiltonian cycle/path problem or the traveling salesman tour. The former requires each vertex be visited exactly once and only the edges of the graph can be used. The second tries to minimizes the total travel distance instead. Both problems are NP-hard~\cite{Garey1990computers}. Euclidean TSP has good approximation schemes~\cite{Arora1998PTAS,Mitchell1999PTAS} but these solutions suffer from two potential problems 1) lack of progressive density; 2) cannot support multiple data mules easily. 

\smallskip\noindent\textbf{Random walk.} A practically appealing solution for visiting nodes in a network is by random walk. The downside is that we encounter the coupon collector problem. Initially a random walk visits a new node with high probability. After a random walk has visited a large fraction of nodes, it is highly likely that the next random node encountered has been visited before. Thus it takes a long time to aimlessly walk in the network and hope to find the last few unvisited nodes. Theoretically for a random walk to cover a grid-like network, the number of steps is quadratic in the size of the network~\cite{lovasz96random}. For a random walk of linear number of steps, there are a lot of duplicate visits as well as a large number of nodes that are not visited at all. In the case of multiple random walks, since there is little coordination between the random walks, they may visit the same nodes and duplicate their efforts.

\def\img{\operatorname{Img}}
\def\ker{\operatorname{Ker}}

\section{Theory of constructing space filling curves}

For ease of exposition, we start by summarizing our method in this section. We then describe the theory behind our constructed curve, and end this section with the proof that the curve is dense. For the sake of completeness, we have provided all the theoretical material necessary for understanding our construction.

\subsection{Informal discussion of techniques}
Let us consider the mathematical problem of constructing a dense curve with the desired property of \textit{proportional density} on a two dimensional manifold $S$. We first treat the surface as a one dimensional complex manifold, also called a Riemann surface. This basically means that locally our surface looks like an open set in the complex plane, and the transition maps from one such local ``chart'' to another are holomorphic.

With this point of view, we consider a \textit{holomorphic differential} on our Riemann surface $S$. A holomorphic differential is basically an assignment of a complex-valued holomorphic function on each chart of the surface, that transforms line elements in the correct way; in complex coordinates $z$ and $\bar{z}$, it is a tensor of type $(1,0)$.

Using properties of certain special kinds of holomorphic differentials called Strebel differentials, we partition our surface into pieces, each of which is a flat torus with some holes removed. Each such piece is mapped to a parallelogram with slits (the boundaries of the holes map to the slits). In other words, we view the surface $S$ as a union of parallelograms with slits, with slits being glued together in a certain way. This change of coordinates is mathematically termed a ``branched covering''.

In these coordinates, our curve is just a straight line on the cover. The slope of this line is either irrational, or chosen randomly, depending on the position of the slits and the sides of the parallelograms. Using several important and recent results in \textit{Teichm\"{u}ller theory}, we can prove that this curve is dense. 

Note that although we partition the surface into pieces, we do not cover one piece first and then move on to the next. Instead our curve comes back into each piece infinitely often, increasing the density proportionally to the length.
 
\subsection{Theoretic Background}

\noindent\textbf{Conformal Atlas}
Suppose $(S,\mathbf{g})$ is a surface with a Riemannian metric $\mathbf{g}$. Given any point $p\in S$, there is a neighborhood $U(p)$, one can find the \emph{isothermal coordinates} (i.e.,  local coordinates where the metric is conformal to the Euclidean metric) $(x,y)$ on $U(p)$, such that
\[
    \mathbf{g}= e^{2\lambda(x,y)}(dx^2+dy^2),
\]
where the scalar function $\lambda:U(p)\to \mathbb{R}$ is the conformal factor function. The atlas consisting of isothermal coordinates is called a \emph{conformal atlas}. In the following discussion, we always assume the local parameters are isothermal.

\smallskip\noindent\textbf{De Rham Cohomology}
De Rham cohomology theory is based on the existence of differential forms with certain prescribed properties. Suppose $f:S\to\mathbb{R}$ is a function defined on $S$, then its differential is given by
\[
    df(x,y) = \frac{\partial f(x,y)}{\partial x} dx + \frac{\partial f(x,y)}{\partial y} dy,
\]
Suppose $\omega$ is a differential 1-form on the surface, which has local representation as
\[
    \omega(x,y) = f(x,y)dx + g(x,y) dy,
\]
The \emph{exterior differential operator} $d$ acts on $\omega$,
\[
    d\omega(x,y) = (\frac{\partial g}{\partial x}-\frac{\partial f}{\partial y}) dx\wedge dy,
\]
if $d\omega=0$, then $\omega$ is called a \emph{closed 1-form}. If there exists a function $h: S\to\mathbb{R}$, such that $\omega = dh$, then $\omega$ is called an \emph{exact 1-form}. Exact 1-forms are closed. The first De Rham cohomology group of the surface consists of all non-exact closed 1-forms,
\[
    H^1(S,\mathbf{R}) = \frac{\ker d}{\img d},
\]
where $\ker$, $\img$ represent the kernel and image of the operator $d$.

\smallskip\noindent\textbf{Hodge Decomposition}
The Hodge star operator on differential forms are defined as
\[
    {}^* \omega = {}^*(f(x,y)dx + g(x,y) dy) = (-g(x,y)dx + f(x,y)dy).
\]
A differential 1-form is called a \emph{harmonic 1-form}, if
\[
    d\omega  = 0, d{}^*\omega = 0.
\]
Hodge decomposition theorem claims that each cohomological class has a unique harmonic form. All group consisting of all the harmonic 1-forms is denoted as $H_\Delta^1(S,\mathbb{R})$, which is isomorphic to $H^1(S,\mathbb{R})$.

\smallskip\noindent\textbf{Holomorphic Differentials}
Let $\{(U_\alpha,z_\alpha\}$ be the conformal atlas, where the complex parameter $z_\alpha = x_\alpha + \sqrt{-1} y_\alpha$. Suppose $(U_\beta,z_\beta)$ is another chart, the parameter transition function is $z_\beta(z_\alpha)$ is \emph{holomorphic}, namely, it satisfies the following Cauchy-Riemann equation:
\[
    \left\{
    \begin{array}{lcr}
    \frac{\partial x_\beta}{\partial x_\alpha } &=&  \frac{\partial y_\beta}{\partial y_\alpha }\\
    \frac{\partial x_\beta}{\partial y_\alpha } &=&  -\frac{\partial y_\beta}{\partial x_\alpha }
    \end{array}
    \right.
\]

Let $\Omega$ be a complex differential form with local representation
\[
    \Omega(z_\alpha) = f(z_\alpha) dz_\alpha,
\]
where $f(z_\alpha)$ is holomorphic. A holomorphic 1-form can be decomposed to a pair of conjugate harmonic real differential 1-forms,
\[
    \Omega = \omega + \sqrt{-1}{}^*\omega,
\]
where $\omega$ is harmonic. All the holomorphic differentials form a group $\Omega(S)$, which is isomorphic to $H_\Delta^1(S,\mathbb{R})$.

\smallskip\noindent\textbf{Branched covering} Let $X, Y$ be compact connected topological spaces. A continuous mapping $f:X \rightarrow Y$ is called a branched covering if it is a local homeomorphism everywhere except a finite number of ``branch'' points. In the complex setting, this would mean that a branched covering is, locally at a point $p$, upto composition by biholomorphic maps, of the form $z \rightarrow z^{e_{p}}$, where $e_{p}>1$ for finitely many branch points, and $e_{p}=1$ everywhere else.

\smallskip\noindent\textbf{Trajectory Structure and Strebel differentials} Given a holomorphic 1-form $\Omega$ on a genus $g$ surface, there are $2g-2$ zero points. At each point $p\in S$, the tangent direction $d\gamma \in TM_p$ is called a \emph{horizontal direction}, if $\Omega(d\gamma)$ is real. A curve $\gamma\subset S$ is called a \emph{horizontal trajectory} of $\Omega$, if at each point $p\in \gamma$, $d\gamma$ is along the horizontal direction. The horizontal trajectories through zeros of $\Omega$ are called \emph{critical trajectories}. Similar to holomorphic $1$-forms, one can consider quadratic differentials, which are tensors of type $(2,0)$ in holomorphic coordinates. For quadratic differentials we define a direction to be horizontal if the differential is positive along it, and vertical if it is negative.

If the graph of vertical critical trajectories is compact, the quadratic differential is called Strebel. In the group of quadratic differentials, Strebel differentials are dense~\cite{density_JS}. We will use a holomorphic-$1$ form whose square is Strebel. This will imply that the horizontal trajectories are closed curves.

\subsection{Dense curve construction}

We describe first the branched covering we use to construct our curve, and then prove the density.

\smallskip\noindent\textbf{Branched covering from a Strebel differential} Given a Strebel differential $\Omega$, the critical horizontal trajectories segment the surface to $g$ connected components, denoted as $\{\Gamma_1, \Gamma_2,\cdots, \Gamma_g\}$. Each connected component $\Gamma_k$ is of genus one with boundaries, 
\[
    \partial \Gamma_k = b_k^1 + b_k^2 + \cdots + b_k^{n_k}.
\]
The Strebel differential $\Omega$ induces a flat metric on each $\Gamma_k$, the integration of $\Omega$ on each boundary loop $b_k^i$ maps the boundary loop to a straight line slit. Namely, then integration of $\Omega$ on each $\Gamma_k$ maps $\Gamma_k$ to a flat torus with straight line slits.

The mapping from the surface to the flat tori are diffeomorphic except at the zero points. Locally, the mapping at the zero points is similar to the complex power map $z\mapsto z^2$. Therefore, the zero points are the branch points.

\smallskip\noindent\textbf{The curve we use} Suppose each flat torus is $\mathbb{R}^2/\Gamma_k$, $k=1,2,\cdots,g$. Here $\Gamma_{k}$ represent lattice groups. Then we can find a line $\ell$ on the plane, such that  $\ell$ does not go through any points in the union of lattices $\cup_k \Gamma_k$, since this union is countable. In particular if the lattice points are all rational then a line with irrational slope would do; otherwise we chose a random line. Denote the slope of $\ell$ as $k$. 

On the``welded flat tori", start from one point draw a line $\gamma$ with slope $k$. Then $\gamma$ goes across the handles via the slits; when it hits a slit it moves from one handle to another and continues with the same slope $k$. We take care to chose $k$ in such a way that this line does not pass through the endpoint of any slit. Again, this is easy to maintain since we only have finitely many of these endpoints.

\begin{theorem}
Let $\gamma$ be the curve constructed as above. Then $\gamma$ is dense and does not go through any point more than once.
\end{theorem}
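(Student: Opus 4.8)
The plan is to recognize the ``welded flat tori'' as a \emph{translation surface}: the $1$-form $\Omega$, integrated along paths, furnishes an atlas on $S$ minus the zeros of $\Omega$ whose transition maps are translations $z\mapsto z+c$ of the plane, the $2g-2$ zeros becoming cone points (the endpoints of the slits). Under this identification the constant-slope structure on the tori is exactly the flat structure induced by $\Omega$, and our curve $\gamma$ is precisely an orbit of the straight-line (linear) flow in the fixed direction $\theta=\arctan k$, started at a regular point and, by the choice of $k$, never meeting a cone point. Density and the simple (non-self-intersecting) property of $\gamma$ then become the two standard questions about such a directional flow: is the direction minimal, and does the orbit close up.

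For density I would first observe that the \emph{saddle connections} of $(S,\Omega)$ --- geodesic segments joining two zeros with no zero in their interior --- have holonomy vectors $\int\Omega$ forming a discrete, hence countable, subset of $\mathbb{C}$; consequently only countably many slopes arise as directions of saddle connections. Choosing $k$ to avoid this countable set (an irrational slope when all the slit and lattice data are rational, and a random slope otherwise, exactly as in the construction) guarantees there is \emph{no} saddle connection in direction $\theta$. By the structure theory of directional flows on translation surfaces --- a direction decomposes into minimal and periodic components whose common boundary consists of saddle connections, so that the absence of saddle connections in direction $\theta$ forces a single minimal component equal to all of the connected surface $S$; this minimality also follows from the Kerckhoff--Masur--Smillie theorem that almost every direction is uniquely ergodic --- the flow in direction $\theta$ is minimal. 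Hence every regular orbit is dense in $S$; in particular $\gamma$ is dense, which is the first claim.

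For the second claim I would argue that $\gamma$ cannot revisit a point. Suppose $\gamma(t_1)=\gamma(t_2)=p$ with $t_1\neq t_2$. Because $\gamma$ is a unit-speed geodesic whose tangent is everywhere the \emph{same} direction $\theta$ (here it is essential that $\Omega$ is a genuine holomorphic $1$-form, so the flow direction is globally consistent rather than defined only up to sign), the velocities at $t_1$ and $t_2$ coincide, and uniqueness of geodesics gives $\gamma(t_1+s)=\gamma(t_2+s)$ for all $s$. Thus $\gamma$ would be periodic, i.e.\ a closed regular geodesic. But a closed regular geodesic lies in a maximal flat cylinder whose boundary is a union of saddle connections in direction $\theta$, contradicting the absence of saddle connections in that direction. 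Therefore $\gamma$ is injective and passes through no point more than once.

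The main obstacle is the density step: the passage from ``no saddle connection in direction $\theta$'' to ``minimality'' is the substantive input and rests on nontrivial results from Teichm\"uller theory rather than on a direct computation. The remaining points --- that $(S,\Omega)$ is a bona fide translation surface, that saddle-connection directions are countable, and that a double point forces periodicity --- are routine once this framework is in place, and the choice of slope $k$ is precisely what lets us invoke the minimality criterion.
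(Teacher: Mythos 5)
Your proof follows essentially the same route as the paper's: both reduce the injectivity claim to aperiodicity (a repeated point on a trajectory of fixed slope $k$ forces the curve to be periodic), and both obtain density from the Teichm\"uller-theoretic fact that a direction containing no saddle connection is minimal, with $k$ chosen (irrational or random) to avoid the countably many saddle-connection directions. Your write-up is more explicit than the paper's sketch --- spelling out the translation-surface structure, the countability of saddle-connection holonomies, and the cylinder argument ruling out periodic orbits --- but the substance and the key external input are identical.
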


\begin{proof}
Density would imply aperiodicity of $\gamma$, which in turn would imply that $\gamma$ does not pass through any point twice. This is because on each torus, $\gamma$ is a line with slope $k$, and if it visited a point twice it must necessarily become periodic. Thus it suffices to prove that $\gamma$ is dense.

Density of such a curve follows from results in Teichm\"{u}ller theory \cite{chapter_billiards}, and we just sketch the proof. Essentially, as long as the direction $k$ does not contain a ``saddle connection'', which is a trajectory connecting two zeroes of the holomorphic differential, it will be dense. In our case, if the slit coordinates are rational, we choose an irrational $k$; otherwise we choose a random $k$. In both cases, with probability $1$ we will neither hit the lattice points $\Gamma_{k}$ nor the end points of the slits. This guarantees density.
\end{proof}

\section{Discrete algorithm}

Here we first describe for simplicity a centralized algorithm that computes the dense curve. In the next subsection we show how to implement the same algorithm in a distributed fashion.

\subsection{Centralized algorithm}

As mentioned in Section $1$, we assume that the sensors are densely deployed on some underlying surface $S$ such that locally the sensors look like staying on a flat plane. Thus we can apply existing algorithms to first come up with a triangulation of the sensors that approximate the underlying surface $S$~\cite{gao01geometric,funke07sketch}. When sensors have geographical coordinates, we can locally fit a plane at each node and apply the algorithm in~\cite{gao01geometric}. If the sensors do not have geographical coordinates, we can apply coordinate-free algorithm for finding a triangulation, as in~\cite{funke07sketch}.

The surface is approximated by a triangular mesh $M=(V,E,F)$, where $V,E,F$ denotes the vertex, edge and face sets respectively. We use $v_i\in V$ to represent a vertex, $[v_i,v_j]$ an oriented edge from $v_i$ to $v_j$, $[v_i,v_j,v_k]$ an oriented face where $v_i$, $v_j$ and $v_k$ are sorted counter-clock-wisely. We assume the mesh is closed with genus $g$.

The algorithm pipeline is as follows: first, compute the basis of the first homology group $H_1(M,\mathbb{Z})$; second, calculate the dual basis of the first cohomology group $H^1(M,\mathbb{R})$; third, obtain the basis of the harmonic 1-form group; fourth, achieve the basis of the holomorphic 1-form group; finally, by integrating a holomorphic 1-form one gets the required branched covering map. Then we use the curve described in the previous section. We proceed to describe algorithms for each step involved.

\paragraph{Homology Group}

\begin{figure}[h]
\begin{center}
\begin{tabular}{cc}
\includegraphics[width=0.2\textwidth]{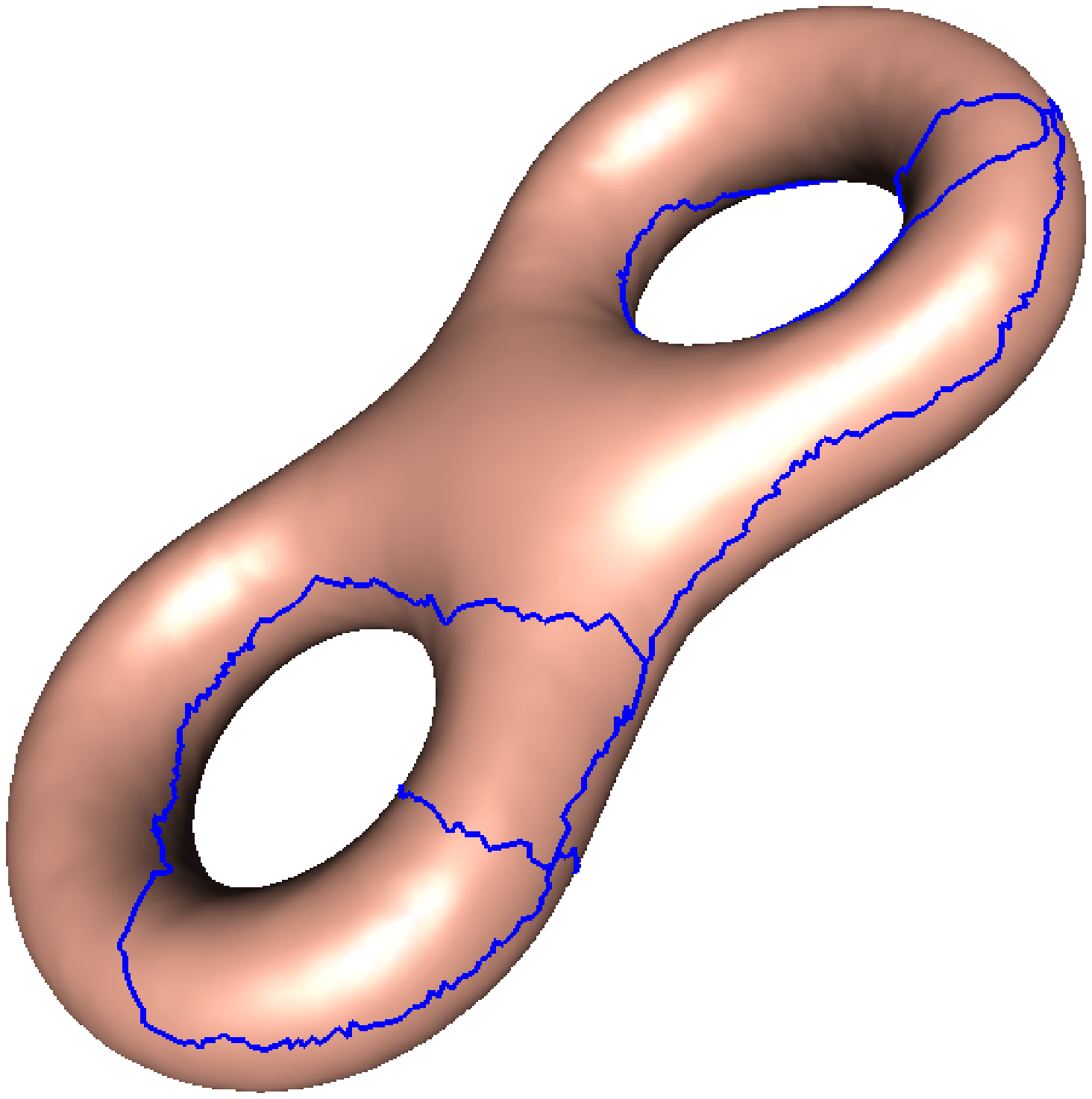}&
\includegraphics[width=0.2\textwidth]{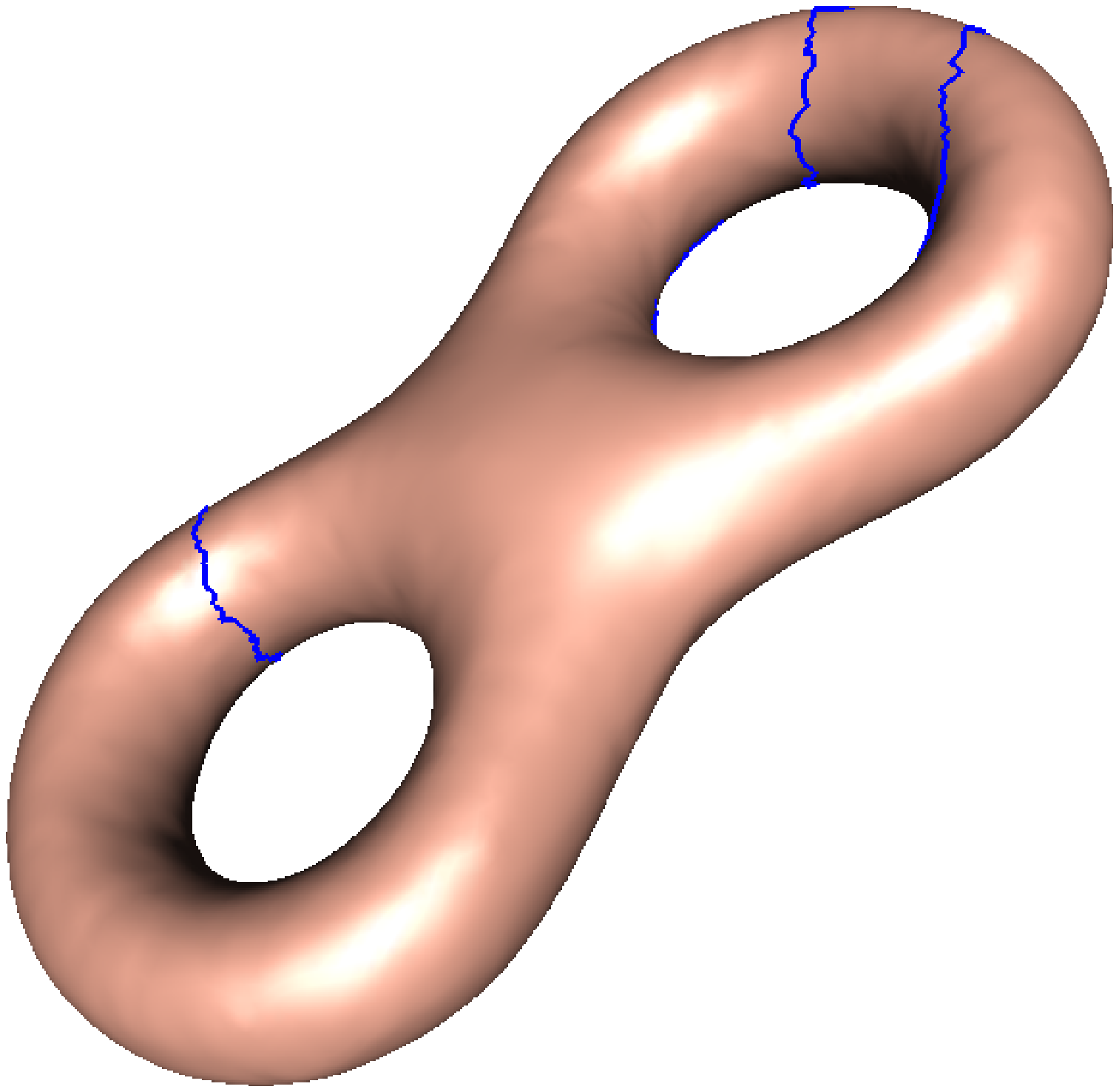}\\
\end{tabular}
\caption{Cut graph. \label{fig:cut_graph}}
\end{center}
\vspace*{-4mm}
\end{figure}

First, the dual mesh $\bar{M}=(\bar{V},\bar{E},\bar{F})$ of the input mesh $M$ is constructed. Each vertex $v_i\in V$, face $f_j\in F$ and edge $e_k\in E$ corresponds to a face $\bar{v}_i \in \bar{F}$, a vertex $\bar{f}_j\in \bar{V}$ and and edge $\bar{e}_k \in \bar{E}$ on the dual mesh respectively.

Second, a spanning tree $\bar{T}$ of the dual mesh $\bar{M}$ is computed. The \emph{cut graph} $G \subset M$ is the union of edges, whose dual edges are not in the spanning tree:
\[
    G = \{ e\in E| \bar{e} \not\in \bar{T}\}.
\]
Intuitively, the mesh $M \setminus G$ with the cut graph removed is a topological disk. See Figure~\ref{fig:cut_graph}. Third, a spanning tree $T$ of the cut graph $G$ is calculated. The complement of $T$ in $G$ is a union of edges:
\[
    G/T = \{e_1,e_2,\cdots, e_{2g}\},
\]
Each edge $e_i$ when included in the spanning tree $T$ (thus $T\cup e_i$) gives rise to a unique loop $\gamma_i \subset T\cup e_i$. These loops
\[
    \{\gamma_1,\gamma_2,\cdots, \gamma_{2g}\}
\]
form the basis of the first homology group $H_1(M,\mathbb{Z})$.

\begin{figure}[h]
\begin{center}
\begin{tabular}{cc}
\includegraphics[width=0.2\textwidth]{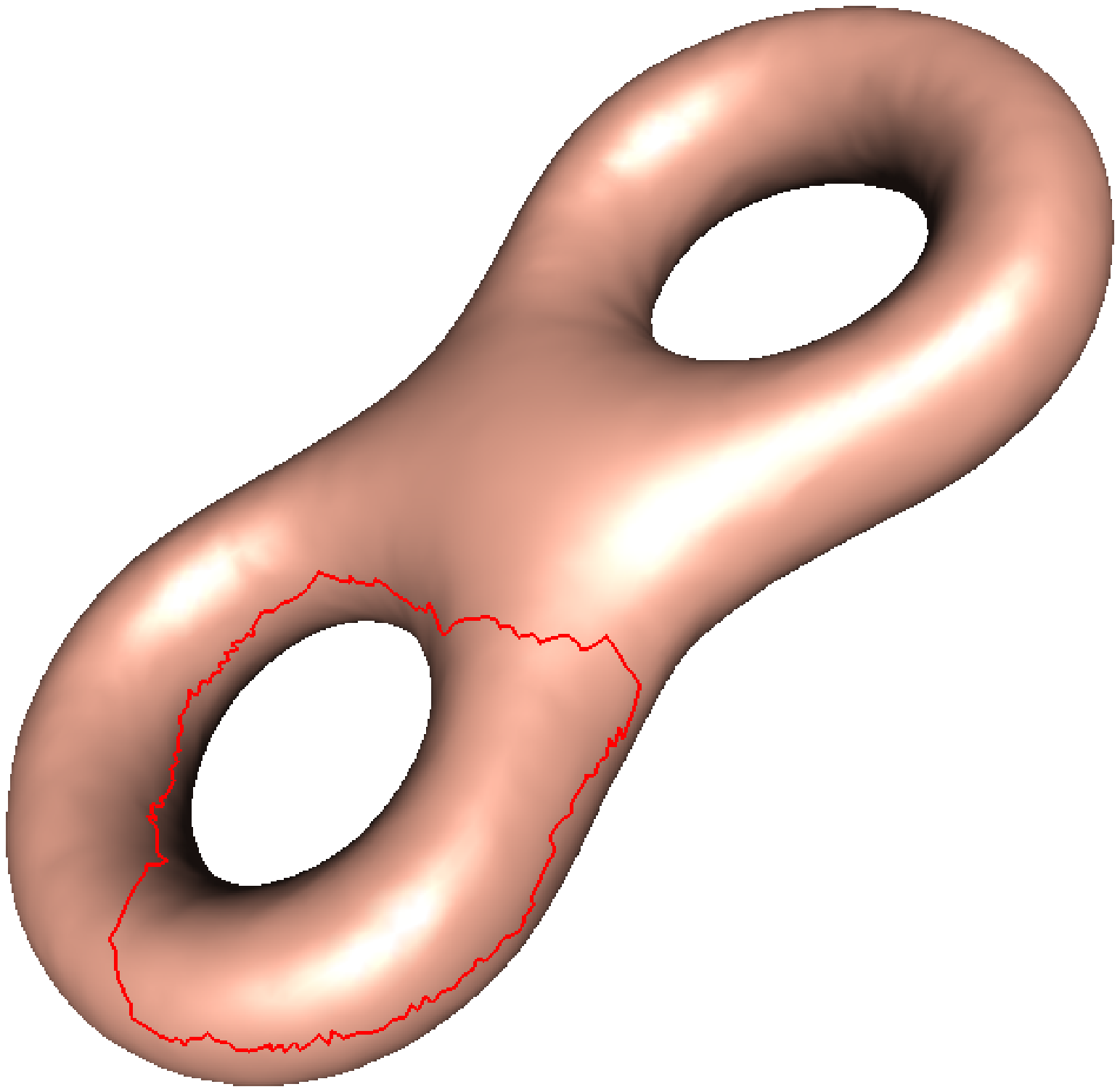}&
\includegraphics[width=0.2\textwidth]{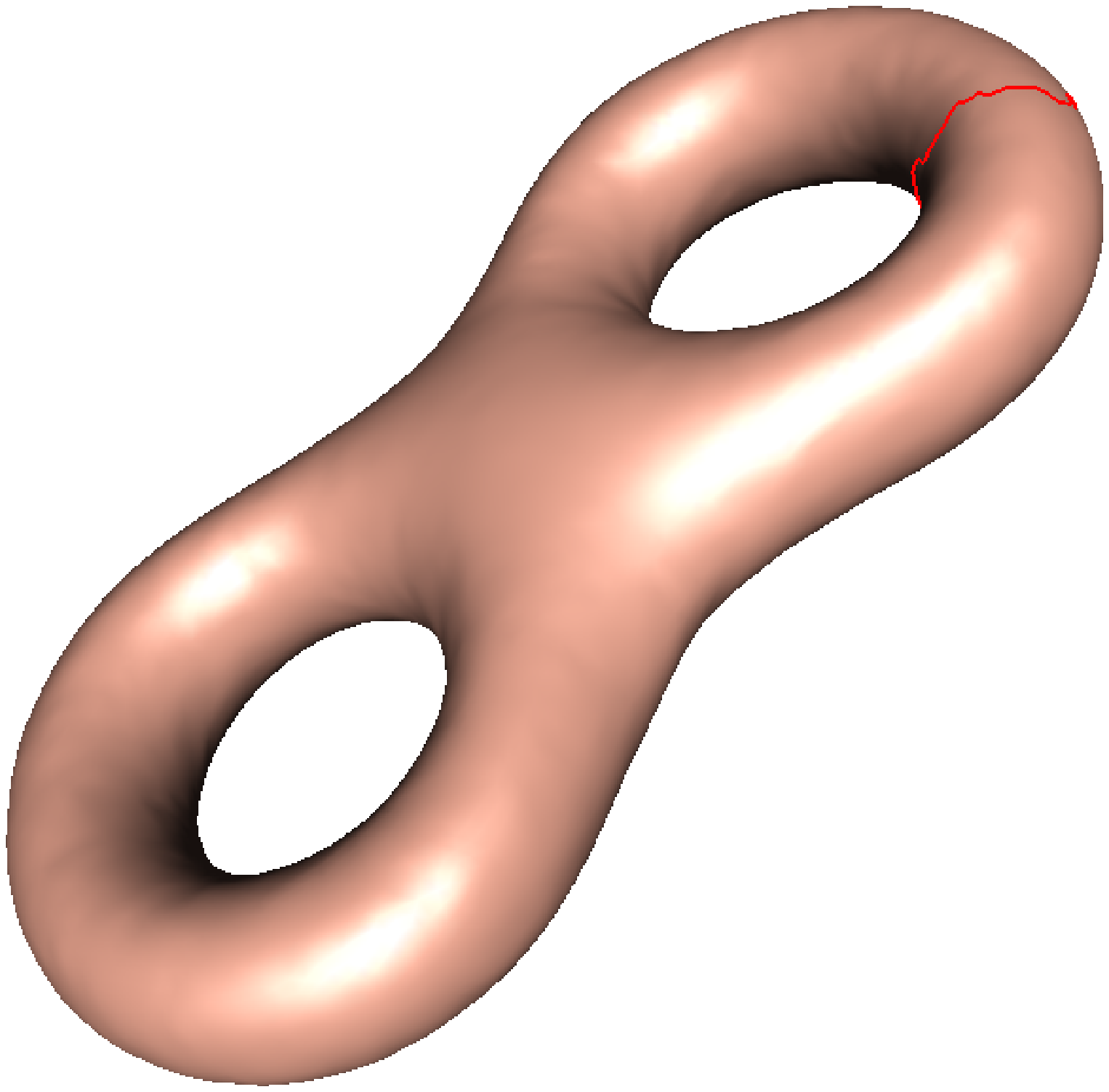}\\
\includegraphics[width=0.2\textwidth]{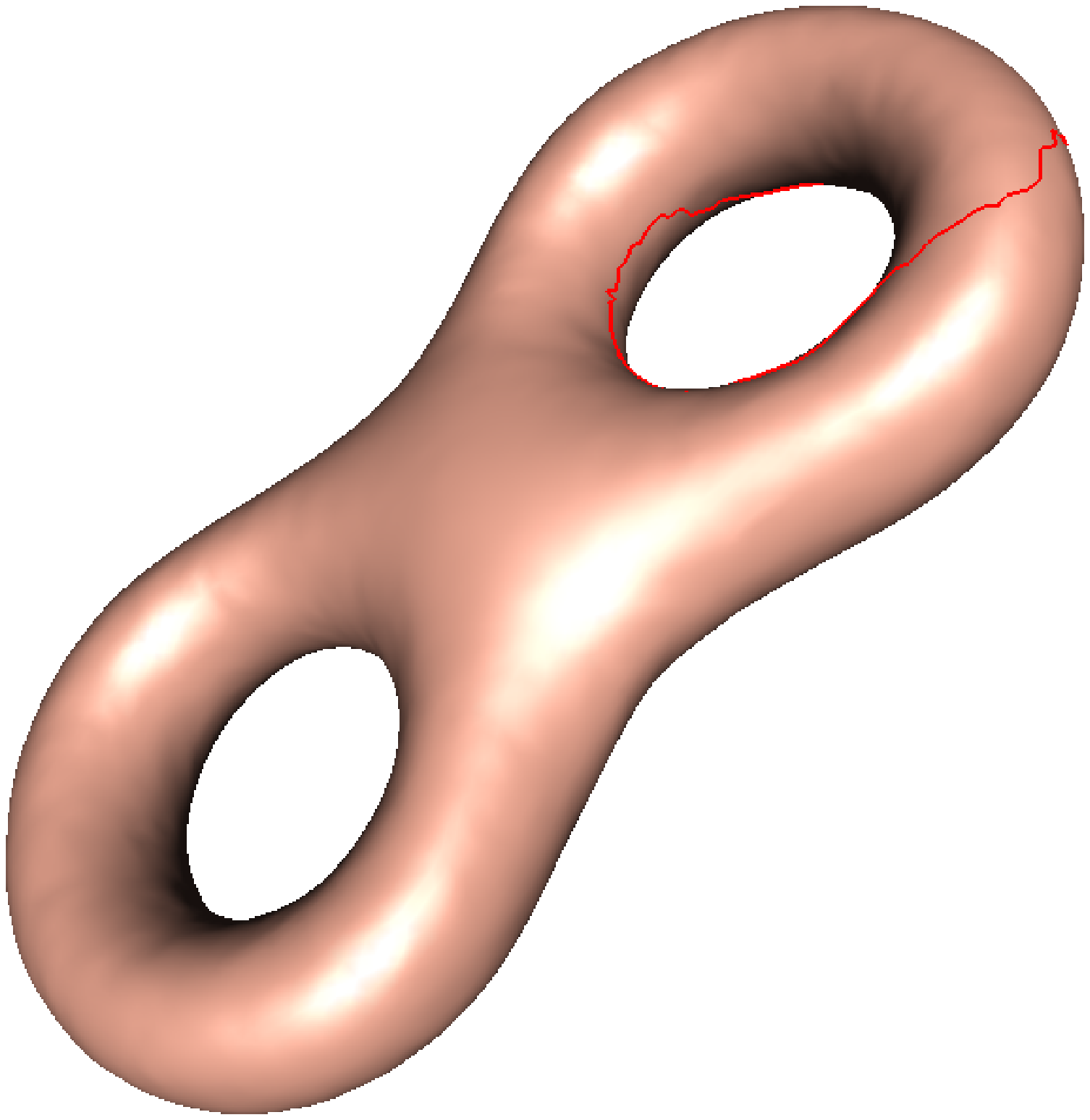}&
\includegraphics[width=0.2\textwidth]{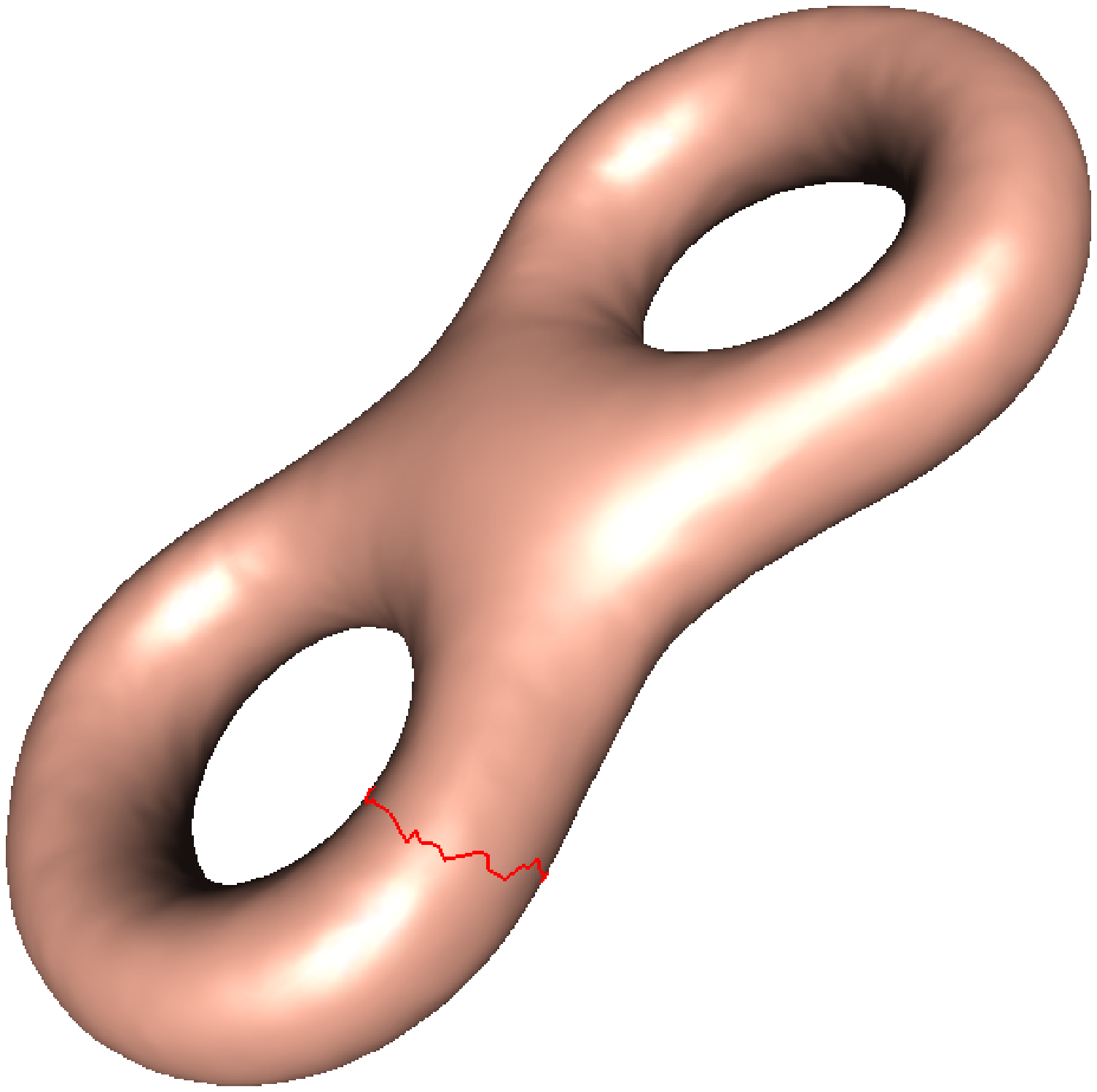}\\
\end{tabular}
\caption{Homology Basis. \label{fig:homology}}
\vspace*{-4mm}
\end{center}
\end{figure}

\paragraph{Cohomology Group}

The differential forms are approximated by discrete forms. A discrete $0$-form  is a function defined on vertices, $f:V\to \mathbb{R}$; a discrete $1$-form is a function defined on the oriented edges, $\omega: E\to \mathbb{R}$, $\omega([v_i,v_j]) = -\omega([v_j,v_i])$; a discrete $2$-form is defined on the oriented faces $\tau: F\to \mathbb{R}$. Discrete exterior differential operator $d$ is dual to the boundary operator $\partial$, for example
\[\begin{tabular}{lll}
    $d\omega([v_i,v_j,v_k])$ &  = &  $ \omega(\partial [v_i,v_j,v_k])$\\
& = & $  \omega([v_i,v_j])+\omega([v_j,v_k])+\omega([v_k,v_i]).$
\end{tabular}
\]

Given the first homology group $H_1(M,\mathbb{Z})$ basis $\{\gamma_1,\cdots,\gamma_{2g}\}$, the dual cohomology group basis can be obtained as follows. For each base loop $\gamma_k$, slice the mesh $M$ to get an open mesh $M_k$. The boundary of $M_k$ has two connected components, denoted them as
\[
    \partial M_k = \gamma_k^+ \cup \gamma_k^-.
\]
Construct a function $f_k :M_k \to \mathbb{R}$,
\[
    f_k(v_i) = \left\{
    \begin{array}{lr}
    1 & v_i \in \gamma_k^+\\
    0 & v_i \in \gamma_k^-\\
   \mbox{rand} & v_i\not\in \partial M_k
    \end{array}
    \right.
\]
Then the $1$-form $df_k$, $df_k([v_i,v_j]) = f_k(v_j)-f_k(v_i)$ is $0$ on all boundary edges. Therefore, one can define $\omega_k$ on the original closed mesh $M$. Suppose $e$ is not on the loop $\gamma_k$, then it has a unique corresponding edge $\tilde{e}$ on $M_k$, define $\omega_k(e) = df_k(\tilde{e})$. If $e\subset \gamma_k$, then let $\omega_k(e)=0$. $\omega_k$ is a closed 1-form, and not exact. These non-exact closed $1$-forms
\[
    \{\omega_1,\omega_2,\cdots, \omega_{2g}\}
\]
form the basis of $H^1(M,\mathbb{R})$. See Figure~\ref{fig:homology}.

\paragraph{Harmonic Differential Group} According to Hodge theory, each cohomological class has a unique harmonic 1-form. Given a cohomology group basis  $\{\omega_1,\omega_2,\cdots, \omega_{2g}\}$, for each closed 1-form $\omega_k$, there is a function $h_k:V\to \mathbb{R}$, such that $\omega_k + dh_k$ is harmonic. By definition, the 1-form is curl free
\[
d(\omega_k + dh_k) = d\omega_k + d^2h_k = 0.
\]
The divergence is
\[
    {}^*d{}^*(\omega_k + dh_k) = 0,
\]
this induces the linear system, for each vertex $v_i\in V$,
\[
    \sum_{[v_i,v_j]\in E} w_{ij} \left( h_k(v_j)-h_k(v_i) + \omega_k([v_i,v_j]) \right)=0,
\]
where $w_{ij}$ is the cotangent edge weight. Suppose edge $[v_i,v_j]$ is shared by two faces $[v_i,v_j,v_k]$ and $[v_j,v_i,v_l]$, then
\[
    w_{ij} = \cot \theta_k^{ij} + \cot \theta_l^{ji},
\]
where $\theta_k^{ij}$ is the corner angle at vertex $v_k$ in triangle $[v_i,v_j,v_k]$. The coefficient matrix of the linear system is positive definite, the solution exists and is unique. The 1-forms
\[
    \{\omega_1 + dh_1 , \omega_2 + dh_2, \cdots, \omega_{2g}+dh_{2g}\}
\]
form the basis of the harmonic 1-form group.

\begin{figure}[h]
\begin{center}
\begin{tabular}{cc}
\includegraphics[width=0.2\textwidth]{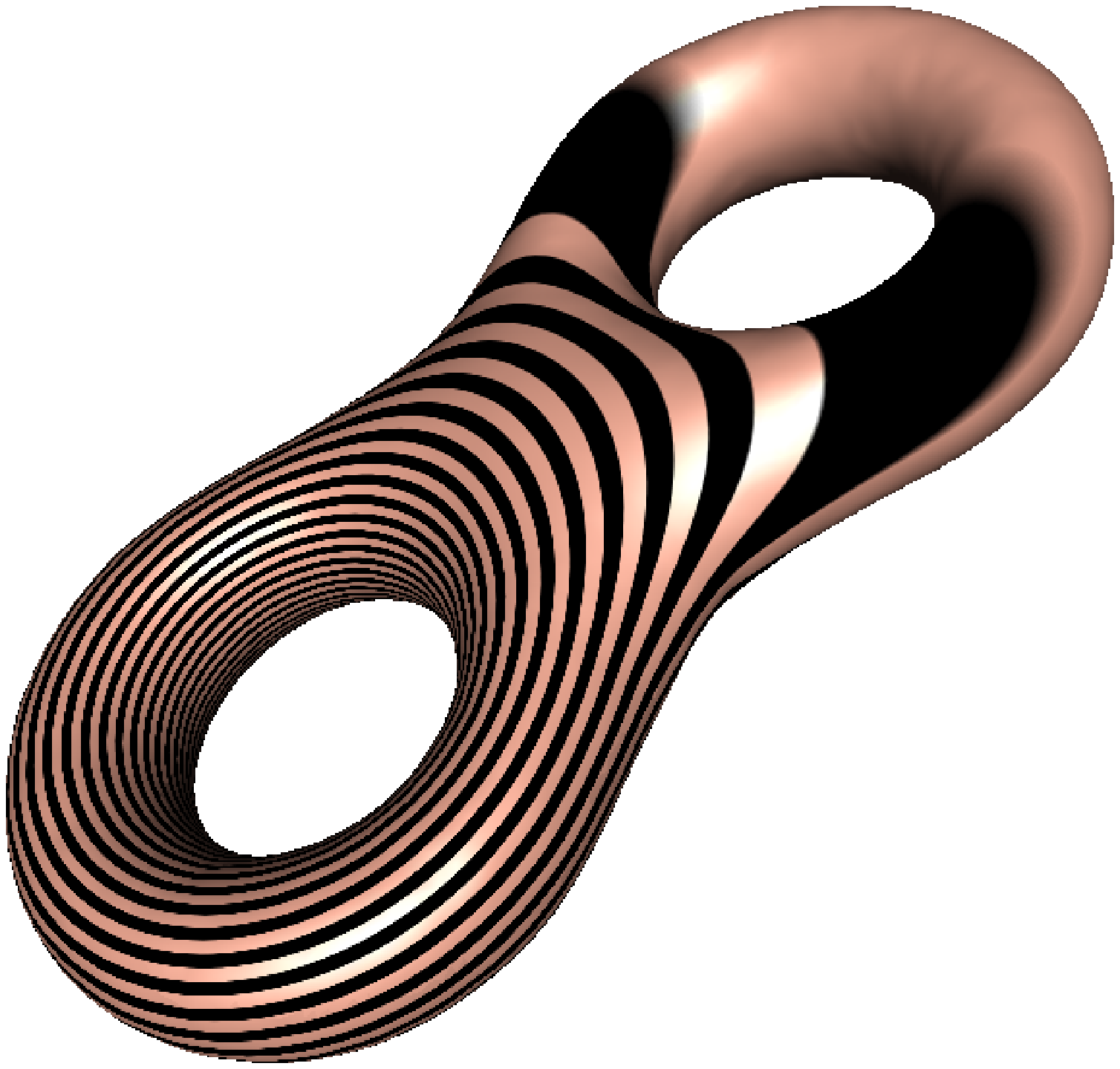}&
\includegraphics[width=0.2\textwidth]{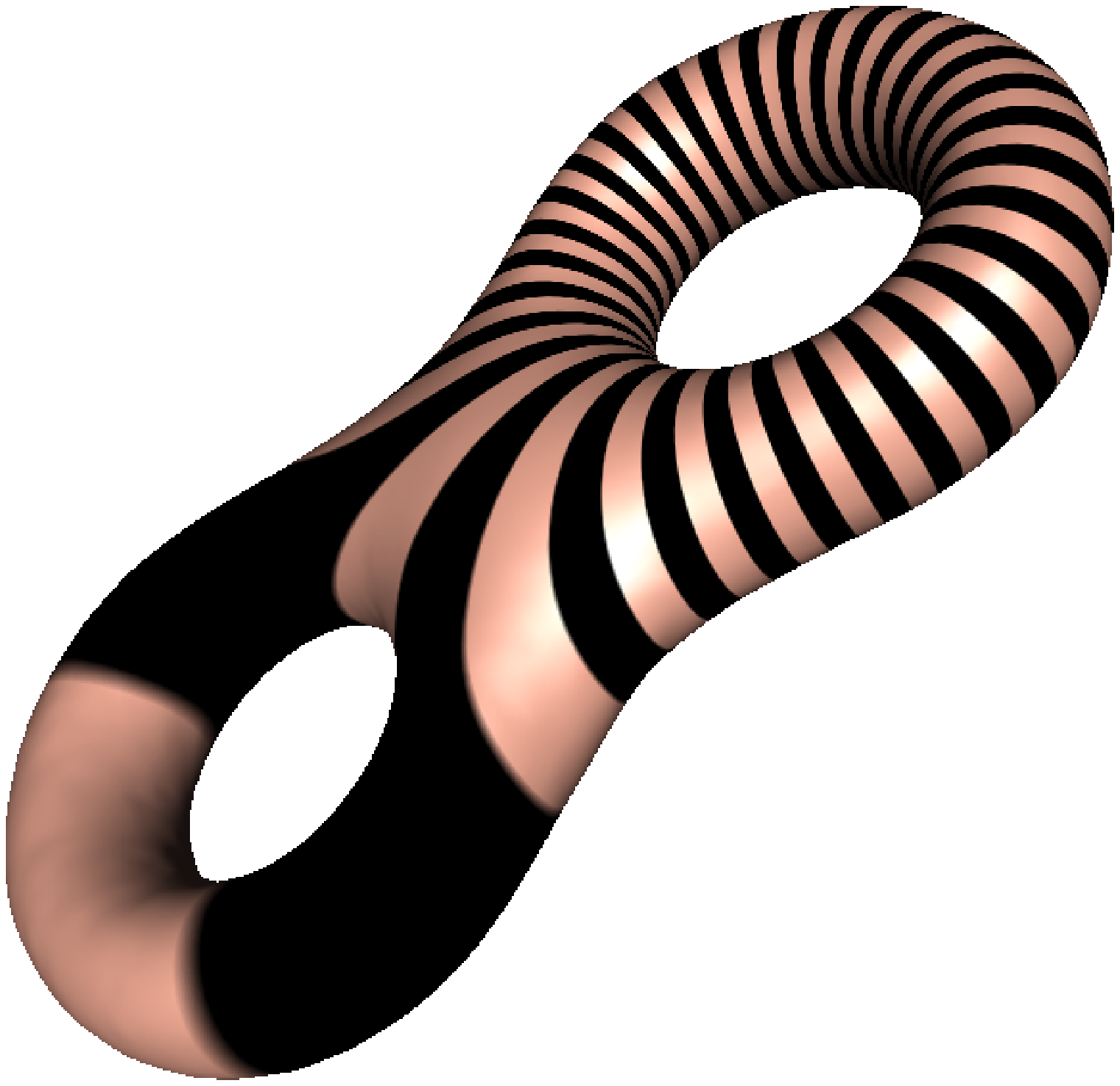}\\
\includegraphics[width=0.2\textwidth]{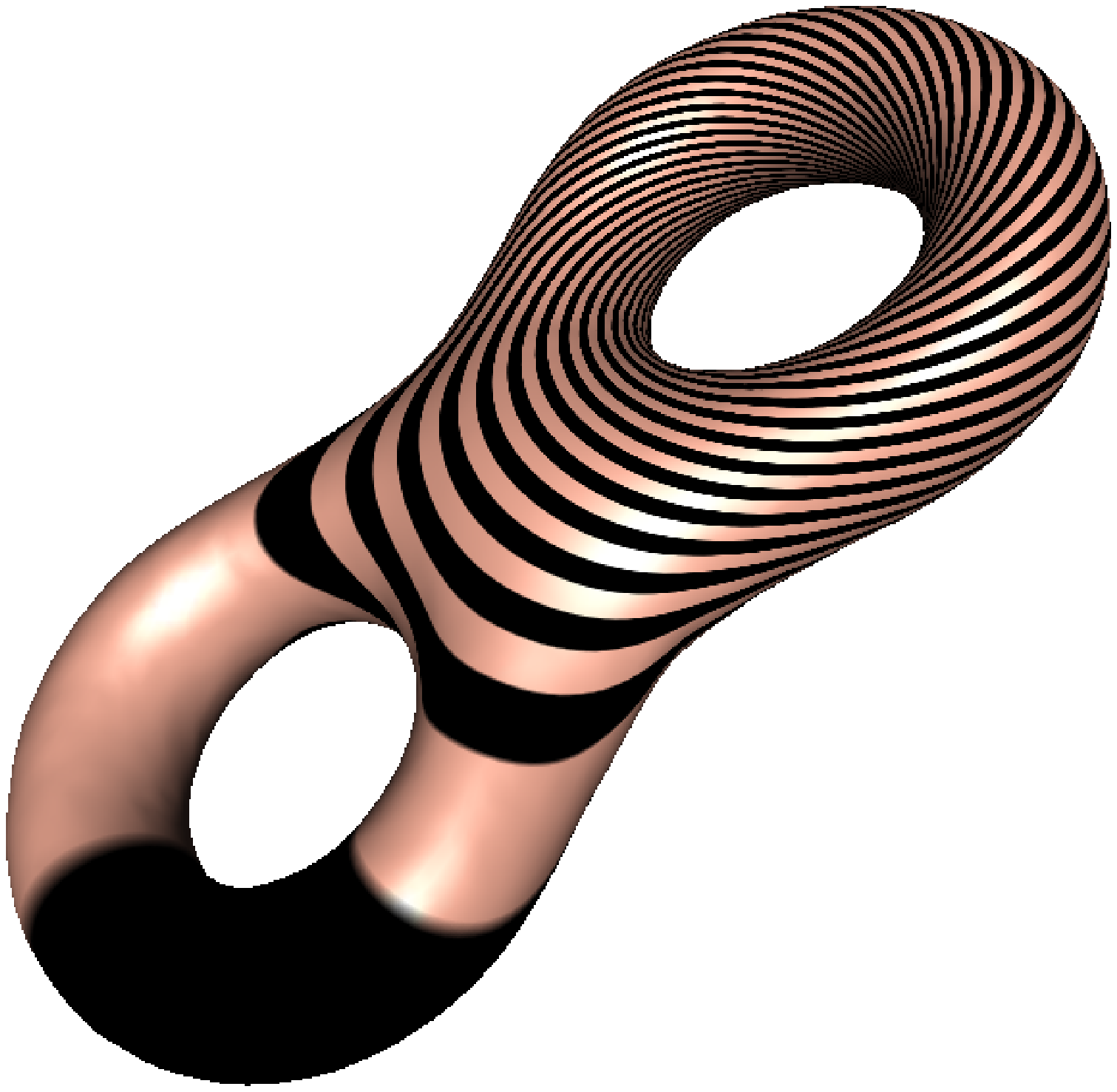}&
\includegraphics[width=0.2\textwidth]{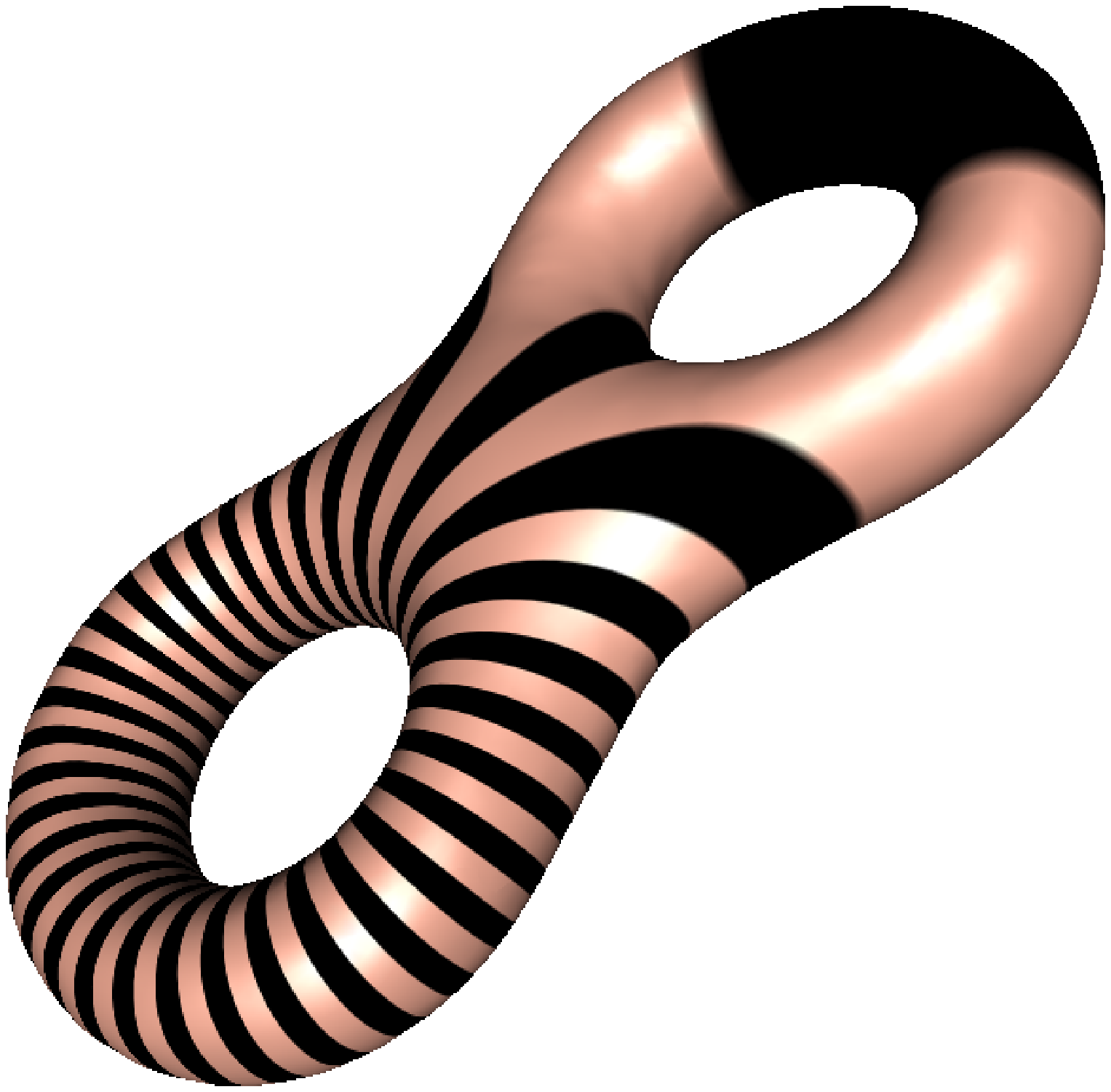}\\
\end{tabular}
\caption{Harmonic 1-form Basis. \label{fig:harmonic}}
\vspace*{-4mm}
\end{center}
\end{figure}

\paragraph{Holomorphic Differential Group} Suppose the harmonic 1-form group basis is given, still denoted as $\{\omega_1,\omega_2,\cdots,\omega_{2g}\}$. Let $\omega$ be a harmonic 1-form, its conjugate 1-form ${}^*\omega$ is harmonic as well, therefore it can be represented as linear combination of $\{\omega_k\}$,
\begin{equation}
    {}^*\omega = \lambda_1 \omega_1 + \lambda_2 \omega_2 + \cdots + \lambda_{2g} \omega_{2g}.
    \label{eqn:hodge_star}
\end{equation}
The coefficients can be obtained by solving the following the linear system
\[
    \int_M {}^*\omega \wedge \omega_ k = \sum_{i=1}^{2g} \lambda_i \int_M \omega_i \wedge \omega_k, k = 1,2,\cdots, 2g.
\]
On one triangle $[v_i,v_j,v_k]$ embed on the plane $\mathbb{R}^2$, the closed 1-form $\omega_k$ can be represented as a constant 1-form $\omega_k = a_k dx + b_k dy$, such that
\[
    \omega_k([v_i,v_j]) = \int_{[v_i,v_j]} a_k dx + b_k dy,
\]
same equations hold for other edges $[v_j,v_k]$ and $[v_k,v_i]$. The wedge product on the face is given by
\[
\omega_i \wedge \omega_j = (a_idx+ b_idy)\wedge (a_jdx+b_jdy) =
\left|
\begin{array}{cc}
a_i&b_i\\
a_j&b_j
\end{array}
\right|dx\wedge dy.
\]
Therefore
\[
    \int_{[v_i,v_j,v_k]} \omega_i \wedge \omega_j = (a_ib_j-a_jb_i)~Area([v_i,v_j,v_k]).
\]
and
\[
    \int_M \omega_i \wedge \omega_j = \sum_{[v_i,v_j,v_k]} \int_{[v_i,v_j,v_k]} \omega_i \wedge \omega_j.
\]
Locally, the Hodge operator is given by
\[
    {}^*\omega_k = {}^*(a_kdx+b_kdy) = a_k dy - b_k dx.
\]
So the coefficients in the linear equation \ref{eqn:hodge_star} can be easily computed. By solving the linear system, the conjugate harmonic 1-form ${}^*\omega$ is obtained.

The harmonic 1-form basis $\{\omega_k\}$, paired with its conjugate harmonic 1-form $\{{}^*\omega\}$ form the holomorphic 1-form basis
\[
    \{\omega_1+\sqrt{-1}{}^*\omega_1, \omega_2+\sqrt{-1}{}^*\omega_2,\cdots, \omega_{2g}+\sqrt{-1}{}^*\omega_{2g}\}
\]

\begin{figure}[h]
\begin{center}
\begin{tabular}{cc}
\includegraphics[width=0.2\textwidth]{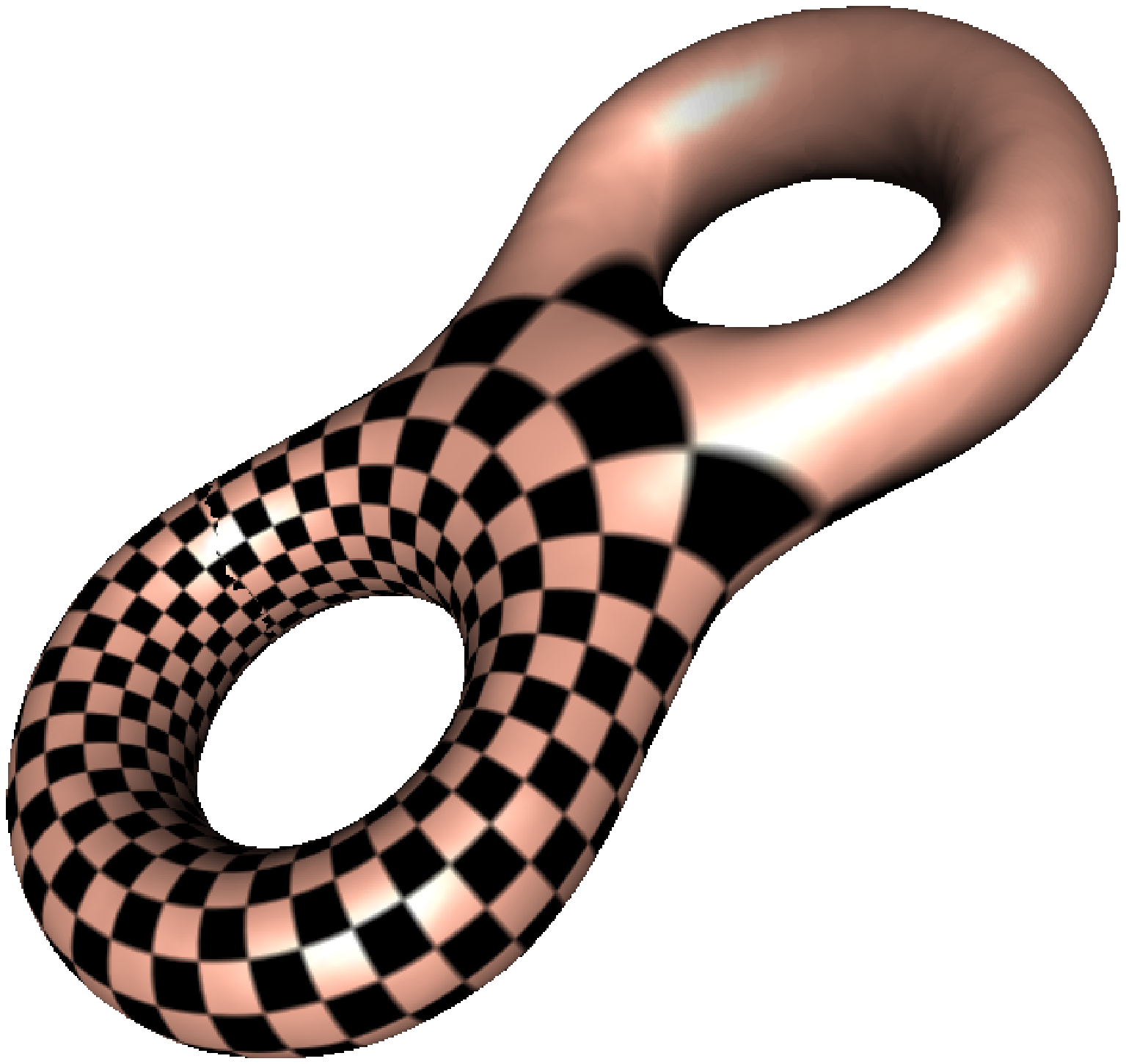}&
\includegraphics[width=0.2\textwidth]{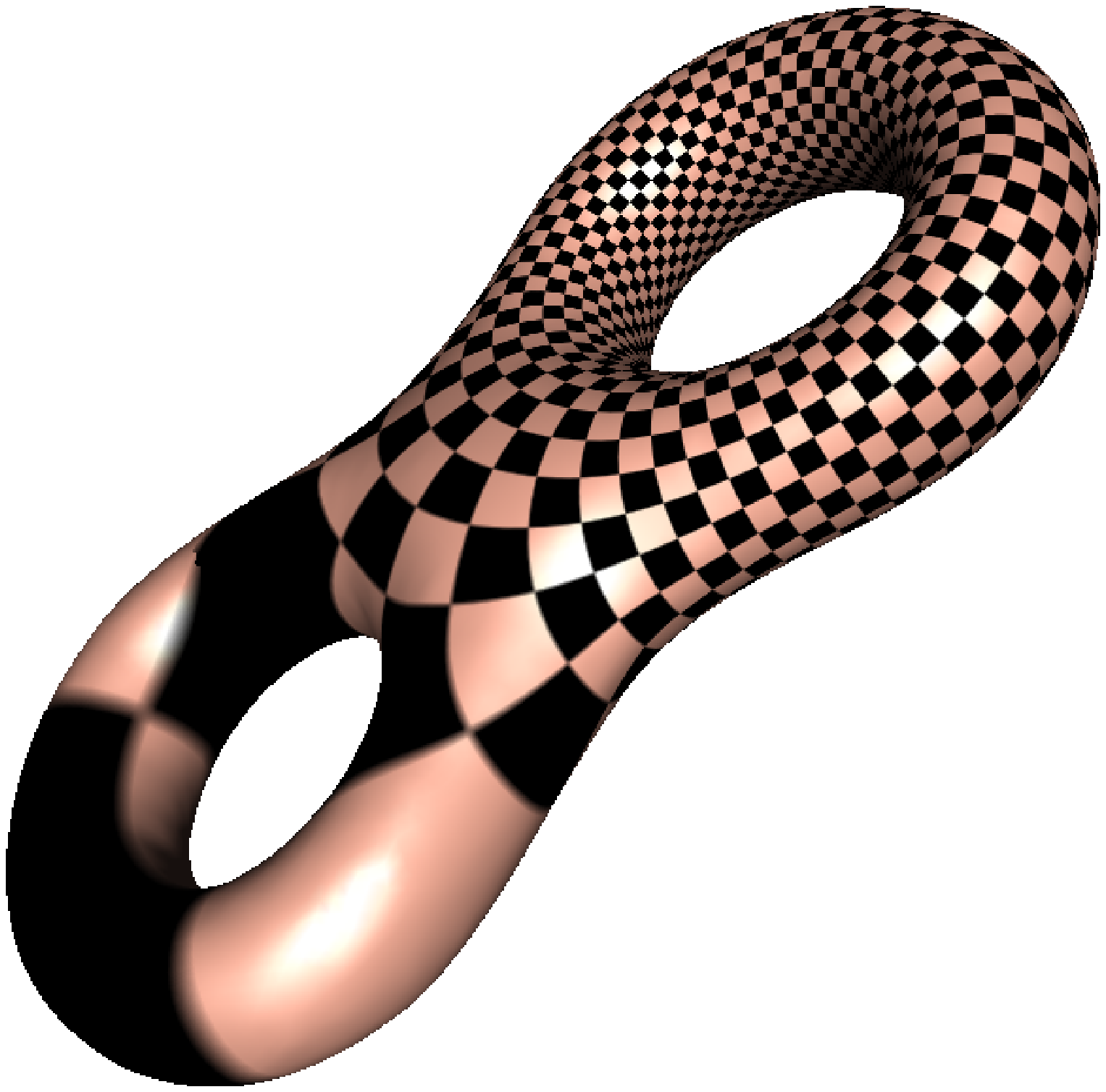}\\
\includegraphics[width=0.2\textwidth]{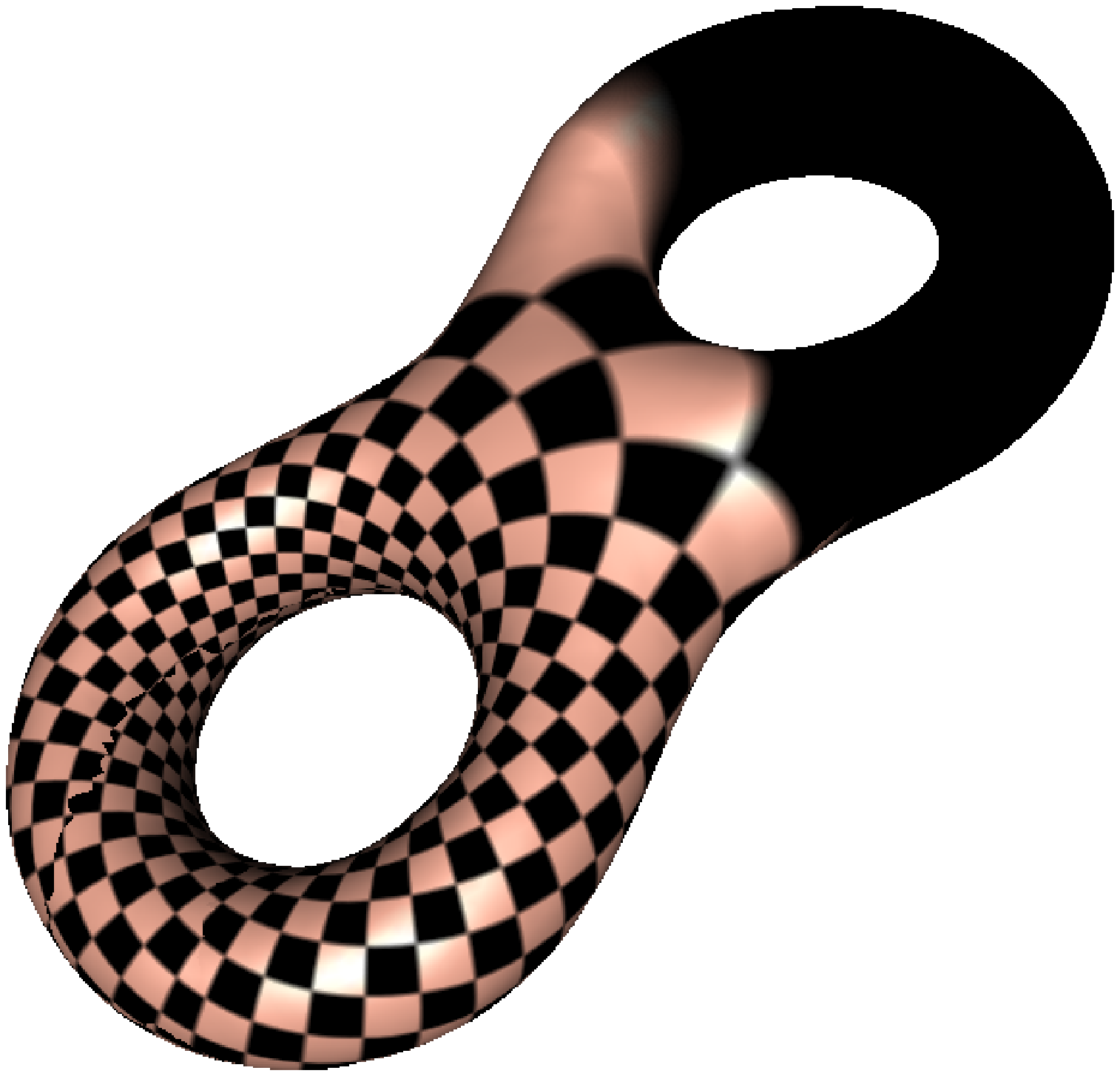}&
\includegraphics[width=0.2\textwidth]{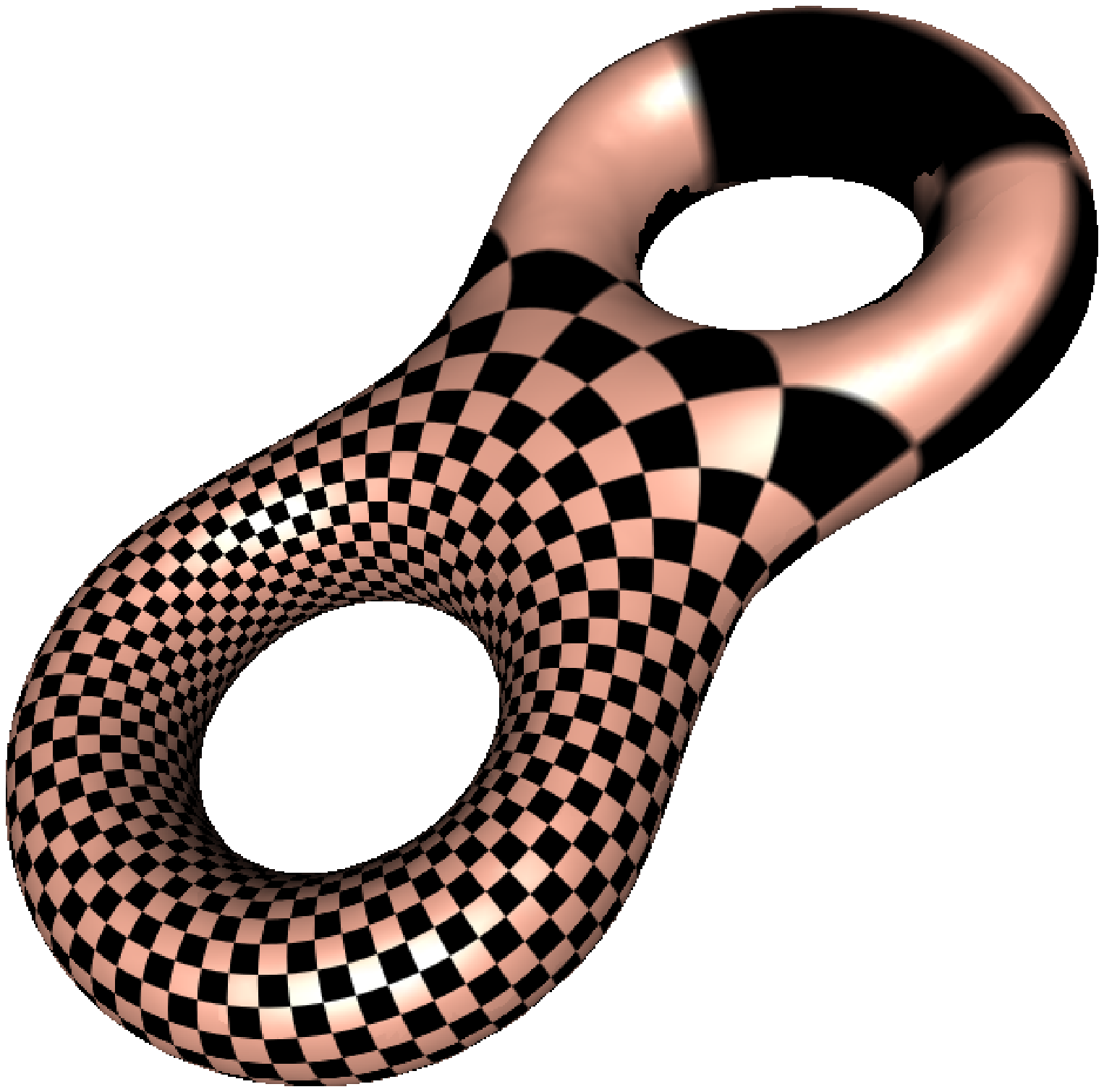}\\
\end{tabular}
\caption{Holomorphic 1-form Basis. \label{fig:harmonic}}
\end{center}
\vspace*{-4mm}
\end{figure}

\paragraph{Branch Covering Map} Compute the cut graph $G$ of the mesh, slice the mesh along the cut group to obtain a \emph{fundamental domain} $M/G$. Choose one holomorphic 1-form $\omega + \sqrt{-1}{}^*\omega$ and integrate the holomorphic 1-form on the fundamental domain to get the branch covering map. Fix a vertex $v_0 \in M/G$ as the base vertex, for any vertex $v_i \in M/G$,
\[
    \varphi(v_i) = \int_{v_0}^{v_i} \omega + \sqrt{-1}{}^*\omega,
\]
the integration path $\gamma \subset M/G$ can be chosen arbitrarily, which consists a sequence of consecutive oriented edges, connecting $v_0$ to $v_i$, denoted as
\[
    \gamma = e_0 + e_1 + \cdots e_k,
\]
such that target vertex of $e_i$ equals to the source vertex of $e_{i+1}$, the source of $e_0$ is $v_0$, the target of $e_k$ is $v_i$.
\[
    \int_\gamma \omega = \sum_{i=0}^k \omega(e_i).
\]
The branching points of $\varphi$ are the zero points of the holomorphic 1-form. The slits are the horizontal trajectories connecting the zeros of the holomorphic 1-form.

\begin{figure}[h]
\begin{center}
\begin{tabular}{cc}
\includegraphics[width=0.2\textwidth]{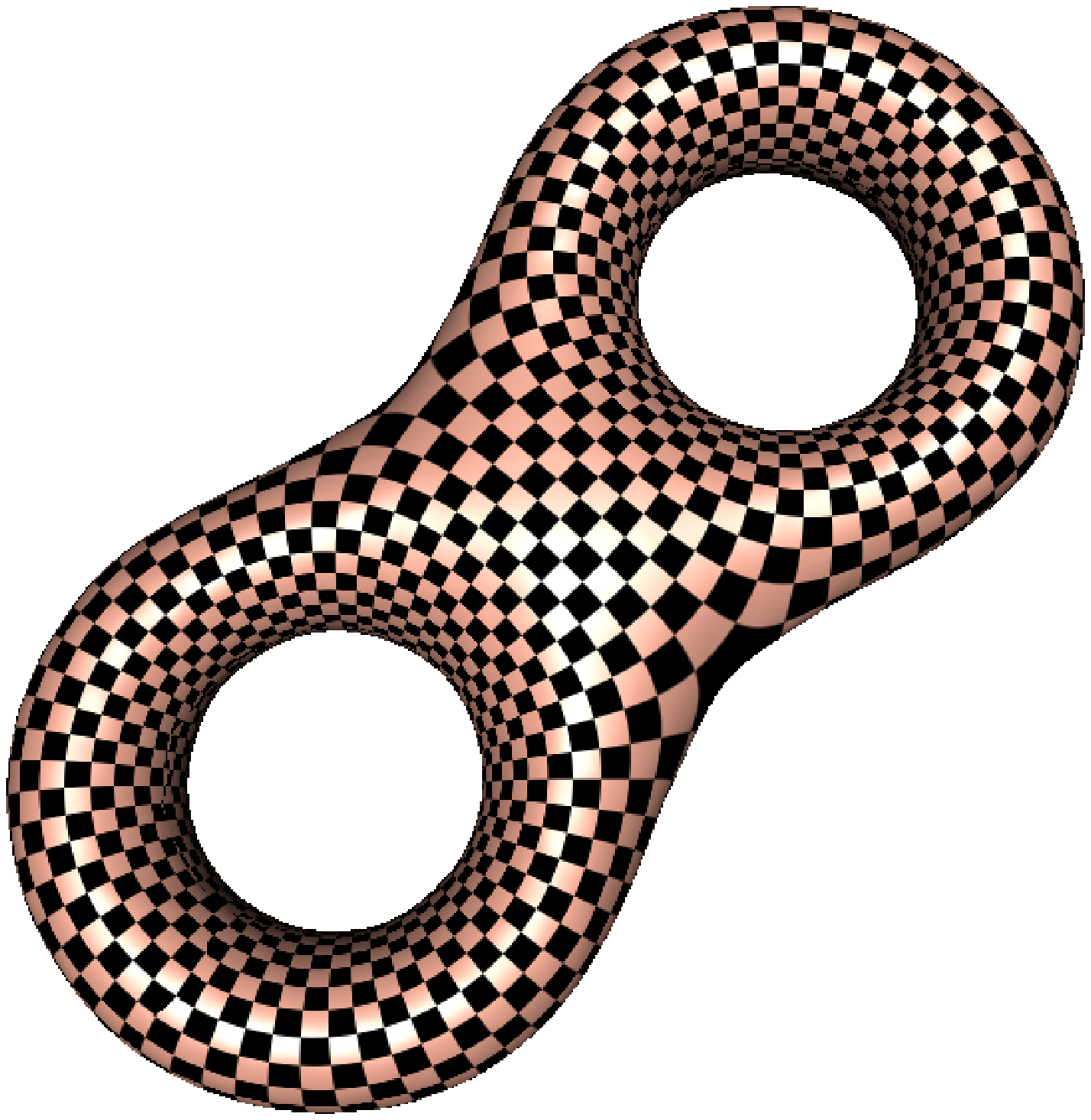}&
\includegraphics[width=0.2\textwidth]{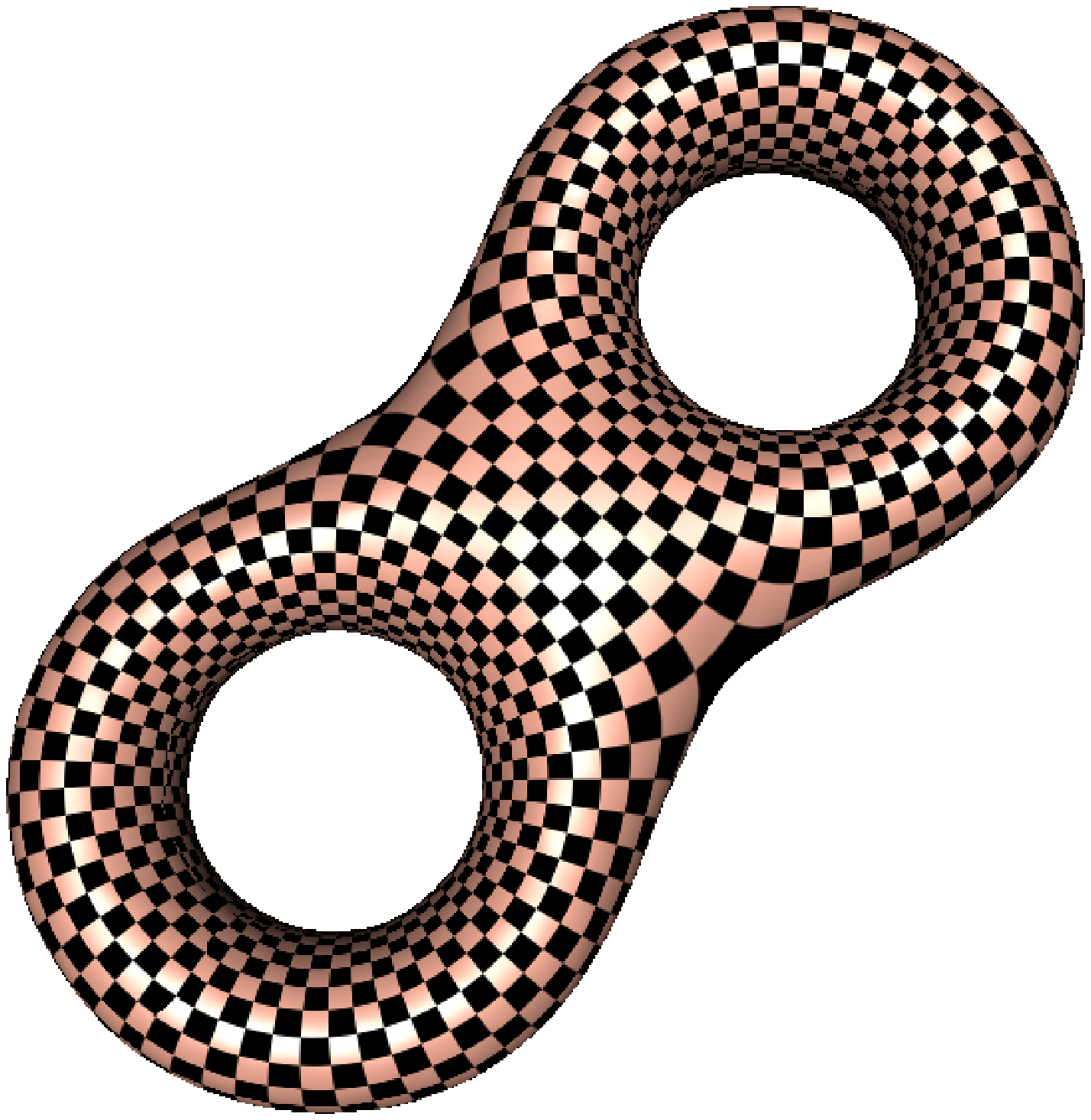}\\
(a) front view & (b) back view \\
\includegraphics[width=0.2\textwidth]{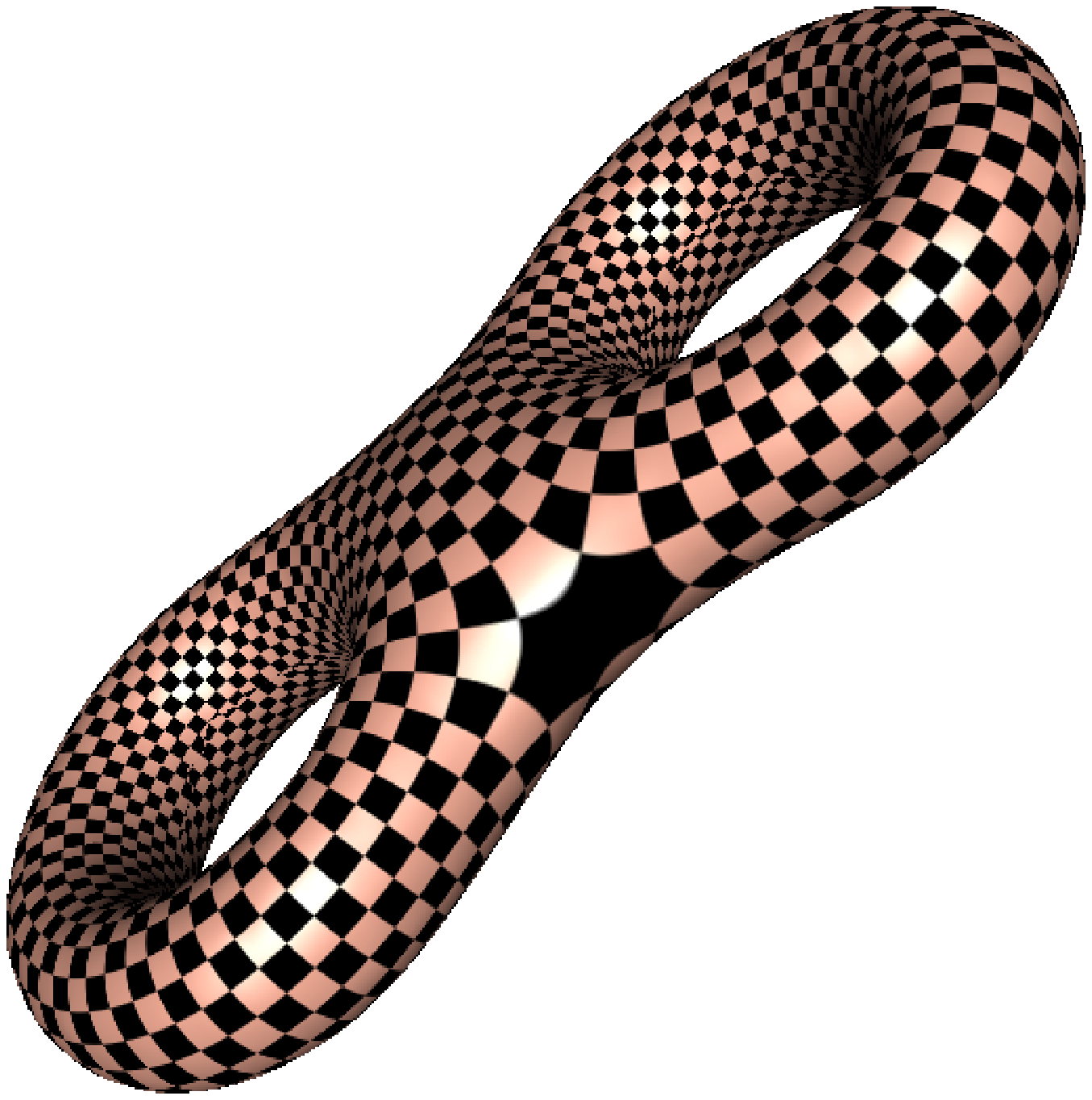}&
\includegraphics[width=0.2\textwidth]{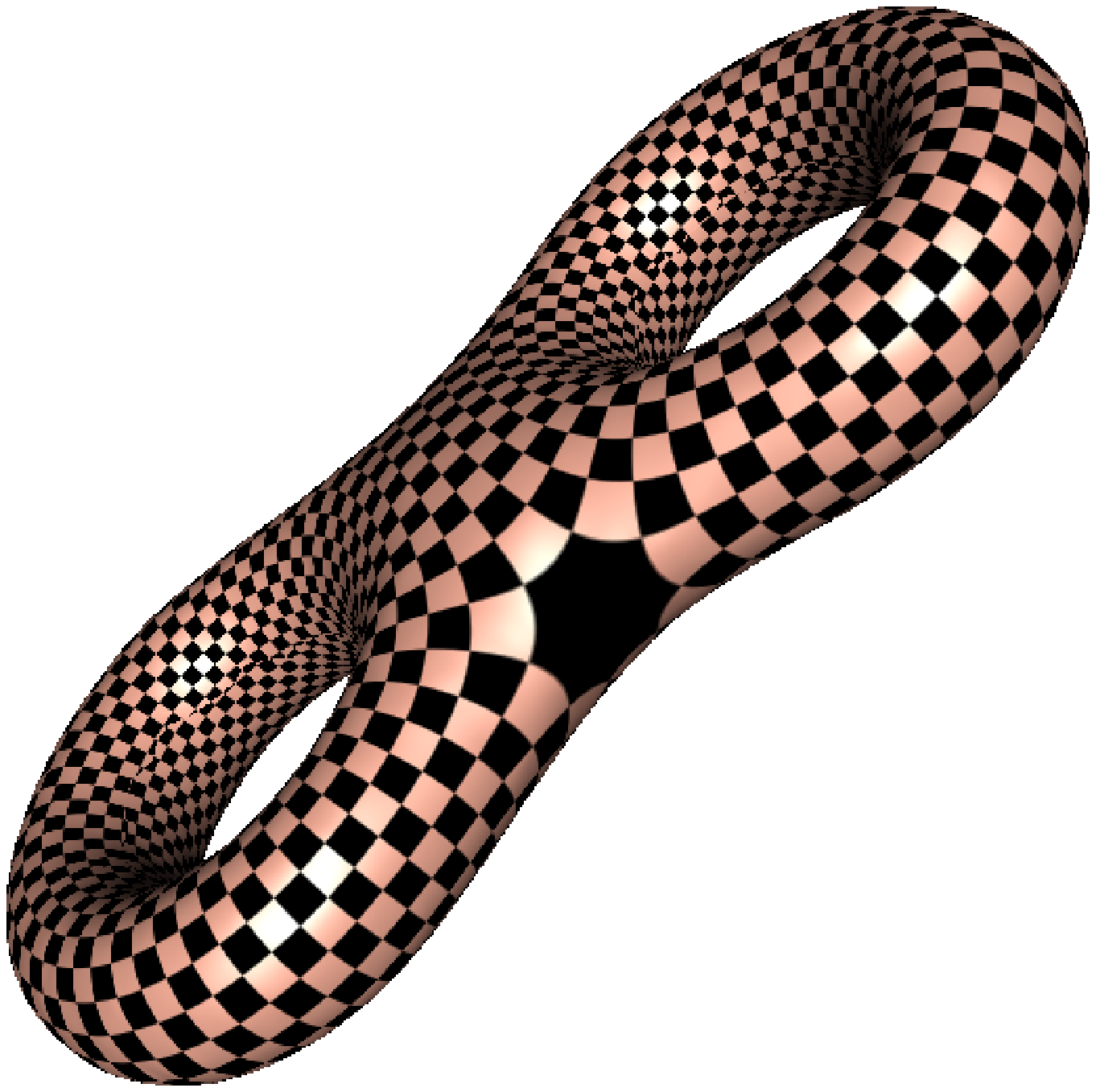}\\
(c) first zero point & (d) second zero point\\
\end{tabular}
\caption{Holomorphic 1-form and zero points. \label{fig:holomorphic_form_zero}}
\end{center}
\vspace*{-4mm}
\end{figure}

\begin{figure}[h]
\begin{center}
\begin{tabular}{cc}
\includegraphics[width=0.2\textwidth]{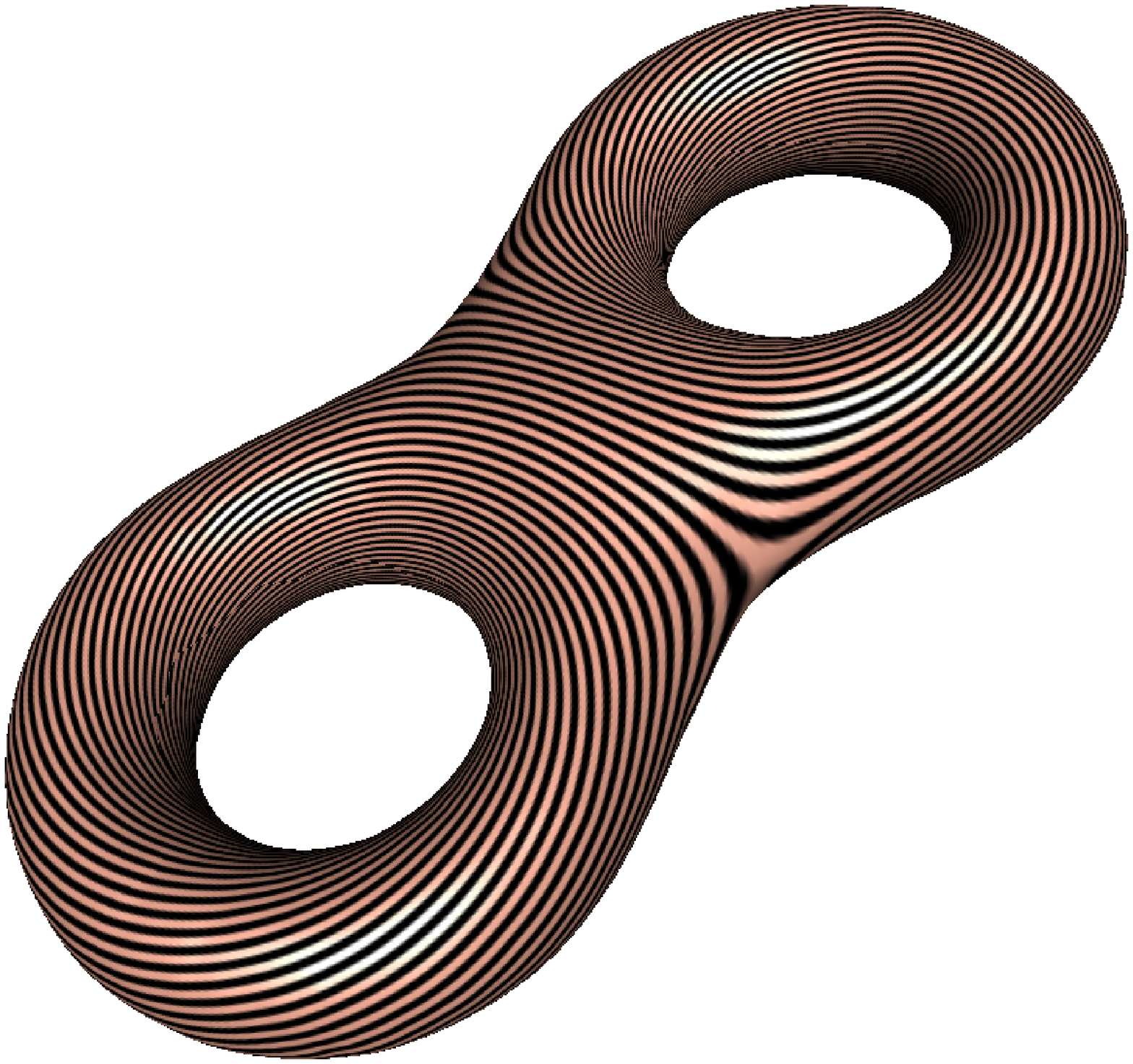}&
\includegraphics[width=0.2\textwidth]{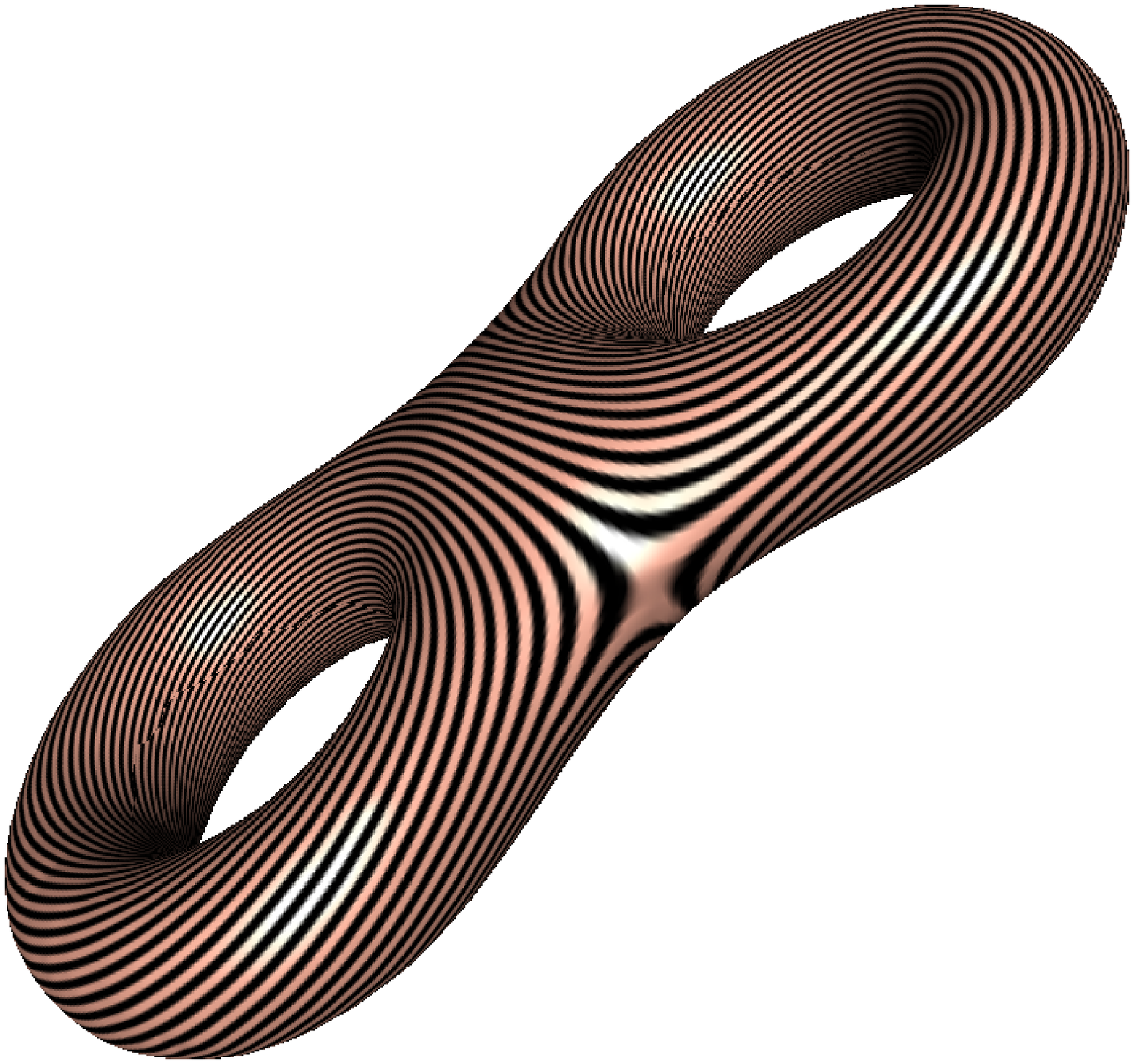}\\
(a) dense curve & (b) zero point \\
\end{tabular}
\caption{Dense curve on the surface. \label{fig:foliation}}
\end{center}
\vspace*{-4mm}
\end{figure}

As shown in Fig.\ref{fig:map}, there are $2g-2$ zero points of the holomorphic 1-form. The horizontal trajectories through the zeros segment the surface into handles as shown in Frame (a). Each handle is conformally mapped onto a flat torus with a slit, the end points of the slit are the zero points, as shown in Frame (b). The flat tori are glued together through slits, the top (bottom) edge of the slit on one torus is glued to the bottom (top) edge of the slit on the other torus, as shown in Frame (c). In the neighborhood of each zero point, the mapping is a branch covering similar to $z\mapsto z^2$, as illustrated in Frame (d).

\begin{figure}[h]
\begin{center}
\begin{tabular}{cc}
\includegraphics[width=0.2\textwidth]{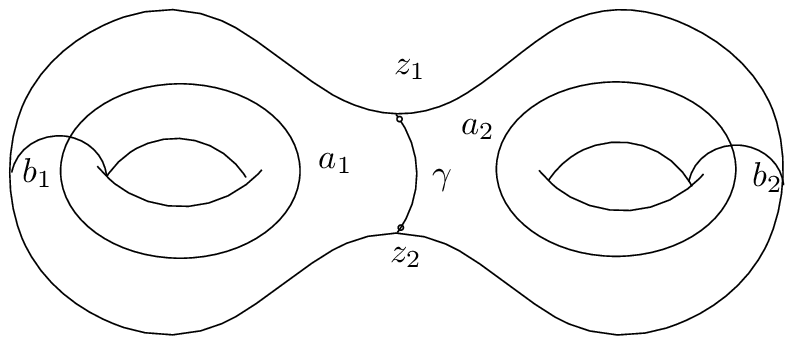}&
\includegraphics[width=0.2\textwidth]{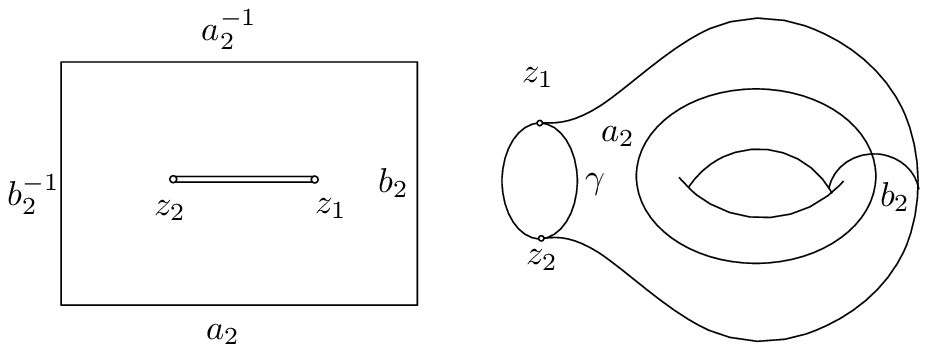}\\
(a) surface & (b) one handle \\
\includegraphics[width=0.2\textwidth]{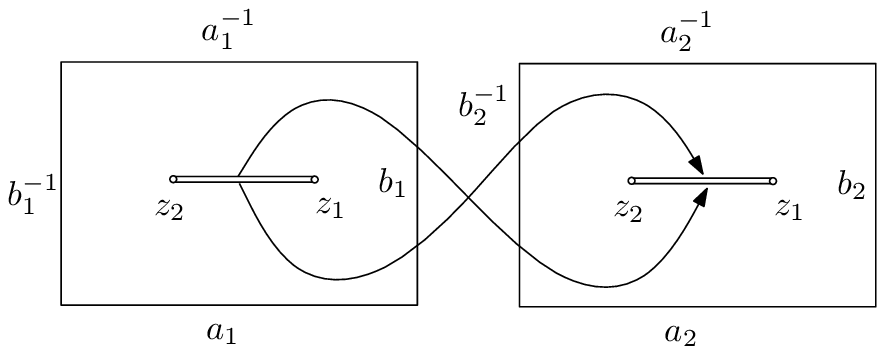}&
\includegraphics[width=0.2\textwidth]{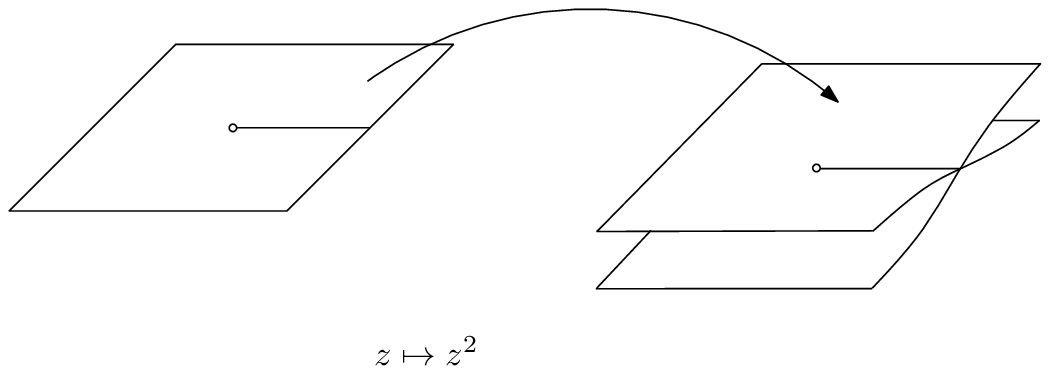}\\
(c) glue pattern & (d) branch point \\
\end{tabular}
\caption{Branch covering map. \label{fig:map}}
\vspace*{-4mm}
\end{center}
\end{figure}

\subsection{Distributed Algorithm}

The centralized algorithms described in the last section can be implemented in the distributed setting as follows

\paragraph{Cut Graph} One node is set as the ``seed node''. This node then sends a message to its neighbors. The message is propagated to the entire network by a single round of flooding. The cut graph is the locus where the wave-front meets, or the nodes which get the message from different sources simultaneously. The details of detecting cut locus can be found in~\cite{wang06boundary}.

\paragraph{Homology Basis} The algorithm boils down to compute a spanning tree of the cut graph. This can be done by breadth-first search by a single round of flooding. Let $G$ be the cut graph, $T$ be the spanning tree, $e$ an edge of $G$ but not on $T$. Then trace back two paths from the end nodes of $e$ in the tree to the root. When the two paths intersect, a loop is obtained.

\paragraph{Cohomology Basis} Let $\gamma$ be a homology basis, make a copy for each node on $\gamma$ and slice the network along $\gamma$. Call the cutting curves that now become two boundary loops as $\gamma^+$ and $\gamma^-$. Construct a function $f$ such that on each node on $\gamma^+$, $f$ equals $1$ and on each node on $\gamma^-$, $f$ equals $0$. $f$ is random on other nodes. Then the gradient of $f$ is a closed 1-form on the original network, dual to $\gamma$.

\paragraph{Harmonic $1$-form Basis} Given a closed 1-form $\omega$, this step is to find a function $f$, such that $ \omega + df$ is harmonic. This can be achieved by a distributed heat diffusion method. At the beginning, $f$ equals to $0$ everywhere. Then for each node $v_i$, $f(v_i)$ is updated by
\[
    f(v_i) \leftarrow \frac{w_{ij}}{\sum_k w_{ik}}\left(f(v_j) + \omega([v_i,v_j])\right).
\]
This diffusion process will converge to the unique harmonic 1-form cohomological to $\omega$.

\paragraph{Hodge Star} Locally, a harmonic 1-form can be represented as a vector field, the Hodge star operator ``rotates'' it about the normal vector of the surface by a right angle. Formally, ${}^* dx = dy$, ${}^*dy = -dx$. 

\paragraph{Branched Covering Map} Given a harmonic 1-form $\omega$, and its conjugate harmonic 1-form ${}^*\omega$, one can integrate them by flooding. Choose one root node $v_0$, set $\varphi(v_0)=(0,0)$. Suppose $\varphi(v_i)$ has been computed, $v_j$ is in its neighbors
\[
    \varphi(v_j) = \varphi(v_i) + (\omega([v_i,v_j]), {}^*\omega([v_i,v_j])).
\] 
Then $\varphi$ gives the branched covering map.
\section{Simulations}

In this section, we simulate the discrete path that traverses and linearizes the sensor network densely distributed on a 2-hole torus in 3D. The communication graph follows the unit disk graph model. The network has a total of $56591$ nodes with average degree $5$, as shown in Figure~\ref{fig:networkFig}(a). We will compare the path generated based on our algorithm with Eulerian cycle and random walk. 

\subsection{Discrete path generation}

\textbf{Dense curve.} We map the genus 2 surface network $S^\prime$ to the canonical, symmetric, torus $S_1$ and $S_2$ with slits, as shown in Figure~\ref{fig:networkFig}(b). Specifically, the nodes on torus $S^\prime_1$ are mapped to $S_1$, and the nodes on torus $S^\prime_2$ are mapped to $S_2$. The zero points on $S^\prime$ form the slits. We take a straight line $\ell$ with slope $e$. When $\ell$ hits the slit, it crosses to the other torus, e.g., from $S_1$ to $S_2$ or \textit{vice versa}. Mapping $\ell$ back to the original network $S^\prime$, we get the dense curve ${\ell}^\prime$ that spirals around the surface. To find a discrete path on the sensor nodes, we expand along the curve ${\ell}^\prime$ to a belt $B$ of width $\delta$. The path starts from an arbitrary node $s$ inside $B$ and always goes to its closet neighbor within the belt. 
This way, we generate a discrete path $P$ visiting the sensor nodes by following the trajectory of ${\ell}^\prime$ in a greedy manner. The path is shown in Figure~\ref{fig:networkFig}(a). 

\smallskip\noindent\textbf{Eulerian cycle.} We build a spanning tree of the network and double all the edges to generate an Eulerian cycle. By traversing along this cycle, we can get a discrete path through the nodes.

\smallskip\noindent\textbf{Random walk.} In the random walk scheme, the next hop of the path is obtained by uniformly randomly choosing a neighbor of the current visiting node.

\subsection{Performance of different discrete paths}

We compare the following two metrics and examine the trend of the metrics when the path length (number of hops) increases.
(1) \textbf{Coverage rate:} the percentage of nodes already visited. (2) \textbf{Average shortest distance:} For each node not yet visited, we calculate the smallest number of hops it could be reached from the current path. The average shortest distance is the average of this value for all the unvisited nodes.

Figture~\ref{fig:coverageFig} shows the network coverage rate as the path gets longer. 
The coverage rates of our method and Eulerian cycle are much better than random walk. Both our method and Eulerian cycle could cover all the nodes when the path length is roughly twice the network size. The coverage rate grow much faster for our method compared to Eulerian cycle except at the very end. 

Specifically, our method could cover 90\% of the nodes with the path length is at about around 1.25 times the network size.

Figure~\ref{fig:distanceFig} shows the average distance between visited nodes and unvisited nodes. 
We can see that the average shortest distances of both Eulerian circle and random walk are much worse than our method. After covering around $5000$ nodes, the average distance to the unvisited nodes under our method is about $3.5$, while the distances are $16.6$ and $22.4$ by Eulerian cycle and random walk respectively. Our method could reach all the nodes in $2$ hops on average when only a quarter of the nodes are visited. We can learn from the results that the path generated by our method could uniformly cover the network.

\begin{figure}[ht]
\vspace{-0.3cm}
\centering
    \begin{tabular}{cc}
    \includegraphics[width=0.20\textwidth]{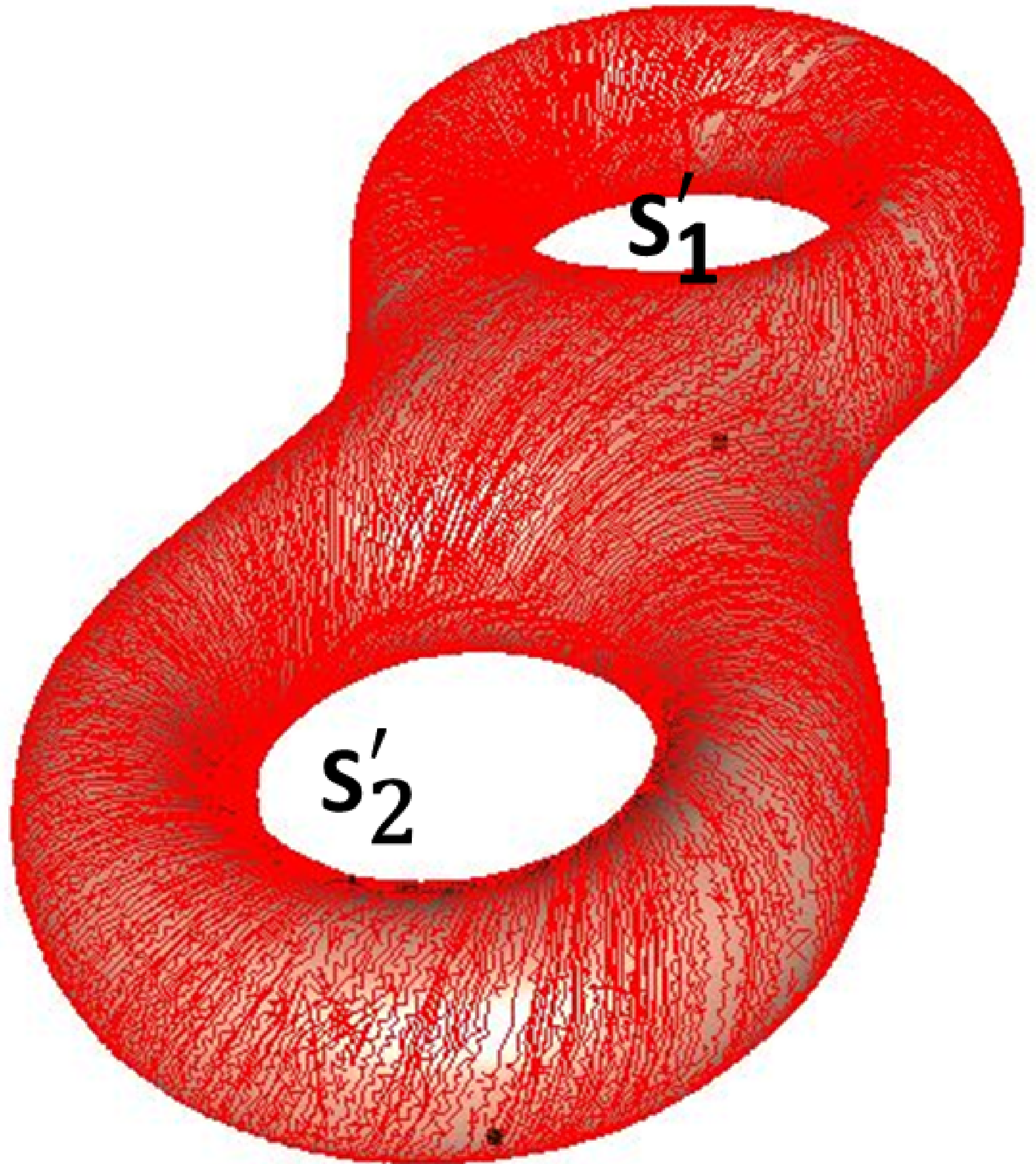}&
    \includegraphics[width=0.22\textwidth]{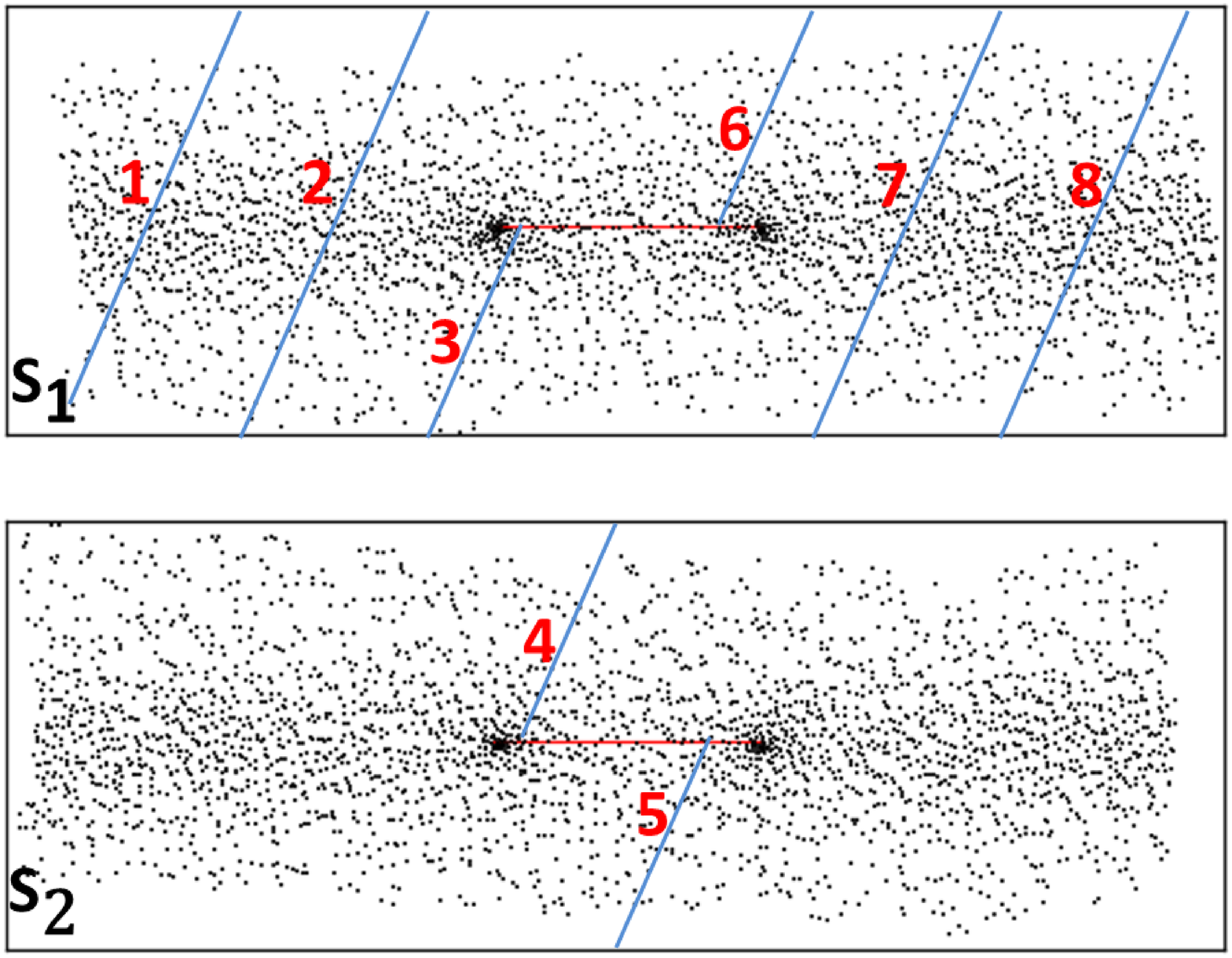}\\
    \footnotesize (a) Dense curve on sensor network& \footnotesize (b) Two torus with slits. \\
    \end{tabular}
  \vspace*{-2mm}
\caption{The sensor network and its related torus with slits. The lines on (a) form the path of dense curve on the network. The line number on (b) explains how to get the dense curve on the two torus.}
 \label{fig:networkFig}
\vspace{-0.3cm}
\end{figure}

\begin{figure}[h]
\begin{center}
\begin{tabular}{cc}
\includegraphics[width=0.45\textwidth]{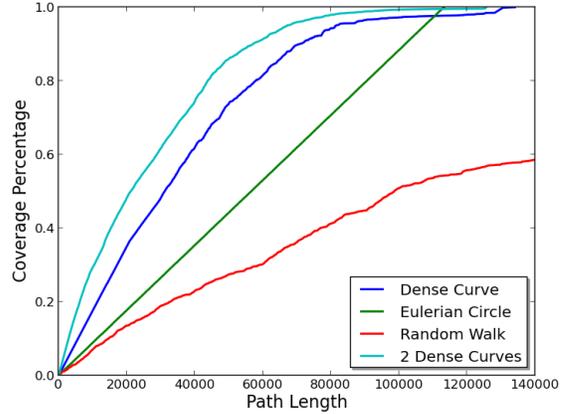}\\
\end{tabular}
\caption{Coverage rate under different path budget. \label{fig:coverageFig}}
\end{center}
\vspace*{-4mm}
\end{figure}

\begin{figure}[h]
\begin{center}
\begin{tabular}{cc}
\includegraphics[width=0.45\textwidth]{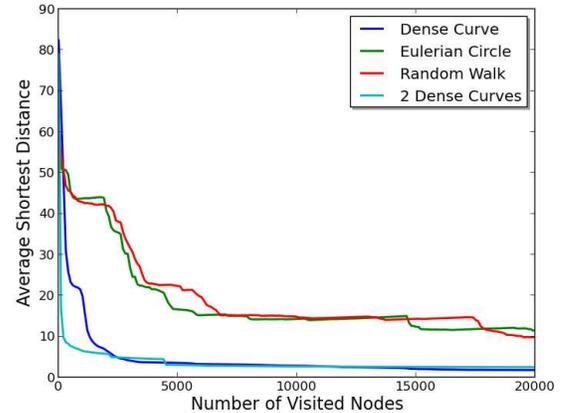}\\
\end{tabular}
\caption{Distance from uncovered nodes to covered nodes. \label{fig:distanceFig}}
\end{center}
\vspace*{-4mm}
\end{figure}

\subsection{The coverage rate under multiple data mules}
Our method can easily generate paths for multiple data mules. 
By deliberately assigning slopes and starting positions of the multiple dense curves, the data mules may cover the network with little overlap between the paths. We can see from Figture~\ref{fig:coverageFig} that two dense curve paths could cover the network even faster than the one curve case. This is not surprising as we have one more path, however, what interesting is that the two paths have little overlap. In fact, there are only $1865$ nodes overlap out of the first $10000$ covered nodes under each path. In Figure~\ref{fig:distanceFig}, two dense curves show better performance especially at the beginning than one dense curve under the average shortest distance. For the first covered $1000$ nodes, the average shortest distance between visited nodes and unvisited nodes under two dense curves is 6.5 while under one dense curve is 20. The multiple paths generated by our method could cover the network more uniformly and faster.

\section{Conclusion}
We show in this paper a new construction for computing a dense curve on a 3D sensor network when the sensors are densely on a 2D manifold. The algorithm substantially generalizes over the prior work by Ban~\etal~\cite{ban13topology} while keeping essentially the same nice properties. As future work we would like to see how to generalize the idea to truly 3D networks (volumetric 3D sensor networks).

\end{document}